\documentclass[11pt, a4paper]{article}

\usepackage[margin=1in]{geometry} 
\usepackage{amssymb}     
\usepackage{graphicx}  
\usepackage[colorlinks=true, linkcolor=black]{hyperref}
\usepackage{lineno}
\usepackage{authblk}          
\usepackage{hyperref}             
\usepackage{abstract}  
\usepackage{amsmath}
\numberwithin{equation}{subsection}
\usepackage[amsthm]{ntheorem} 

\theoremstyle{plain}
\theoremheaderfont{\bfseries}
\theorembodyfont{\itshape}
\newtheorem{theorem}{Theorem}
\hypersetup{
    colorlinks=true,
    citecolor=blue,               
    linkcolor=black,              
    urlcolor=blue                 
}
\theoremstyle{nonumberplain} 
\theoremheaderfont{\bfseries}
\theorembodyfont{\normalfont}
\newtheorem{remark}{Remark:} 

\title{\textbf{
A Unified Discrete and Continuous Theory of Core--Halo Complexity Maximizers
}}
\providecommand{\keywords}[1]
{
  \small	
  \textbf{\textit{Keywords:}} #1
}

\author[1]{Akshat Sharma}
\affil[1]{\small \textit{Indian Institute of Science Education and Research Kolkata}} 
\affil[1]{\texttt{\textbf{Email:}as23ms078@iiserkol.ac.in}}

\date{} 

\begin{document}

\maketitle

\noindent \begin{abstract}

The maximization of statistical complexity has long been associated with the emergence of probability distributions lying between perfect order and complete disorder. While previous studies have shown that complexity-maximizing distributions exhibit a two-level structure in finite discrete systems, an analogous unified treatment for both discrete and continuous probability spaces has remained unavailable. In this work, we develop a general variational framework for a generalized statistical complexity constructed from Shannon and R\'{e}nyi entropies. We derive a common stationary equation governing both discrete probability masses and continuous probability densities and prove that every stationary solution necessarily possesses exactly two probability levels, establishing a universal core–halo structure. We further demonstrate that the optimization problem reduces to a single multiplicity parameter and prove that the global complexity maximum is attained by the smallest admissible core, corresponding to a single dominant state in the discrete case and an infinitesimal core in the continuous limit. These results provide a complete analytical characterization of the complexity-maximizing distributions and reveal a common mathematical structure underlying complexity optimization in both discrete and continuous settings. The framework establishes a unified foundation for generalized statistical complexity with potential applications in statistical mechanics, information theory, and the analysis of complex systems.

\end{abstract}
\vspace{2em}
\keywords{
Statistical complexity,
R\'{e}nyi entropy,
Information energy,
Core--halo structure,
Complexity maximization,
Variational principle,
Continuous entropy,
Discrete systems
}
\vspace{2em} 

\tableofcontents
\newpage
\section{Introduction}
\label{introduction}
The notion of complexity occupies a unique position in modern science and has emerged as one of the most challenging and interdisciplinary problems connecting physics, chemistry, biology, information theory, machine learning, and the broader science of complex systems. While the concepts of order and disorder have been successfully formalized through the frameworks of mechanics, thermodynamics, and information theory, the mathematical characterization of complexity itself remains elusive. Complexity appears to arise in an intermediate regime between complete order and complete randomness, where structured organization emerges through the interplay of competing tendencies toward regularity and disorder.

The importance of this problem has been recognized for many decades. Schr\"{o}dinger, in his celebrated monograph \textit{What is Life?}\cite{schrodinger1944what}, emphasized that living systems maintain organized structures while continuously exchanging energy and information with their environment. Wiener viewed organization and communication as central ingredients underlying cybernetic systems\cite{wiener1948cybernetics}, whereas Prigogine highlighted the role of nonequilibrium processes and dissipative structures\cite{prigogine1978time} in generating complexity. Haken's theory of synergetics\cite{haken1977synergetics} and the subsequent development of complex systems science further reinforced the idea that complexity represents a distinct regime characterized neither by perfect order nor by complete chaos. Examples of such behavior are abundant in nature: atomic and molecular systems exhibit intricate electronic structures, biological organisms maintain highly organized networks of interacting components, ecological communities display rich patterns of cooperation and competition, and neural networks give rise to emergent intelligence. Even the large-scale structure of the universe exhibits a remarkable coexistence of order and randomness. These observations suggest that complexity is fundamentally associated with the appearance of organized structures and cannot be identified solely with either regularity or disorder.

A desirable measure of complexity should therefore satisfy several qualitative requirements: it should vanish in perfectly ordered systems, where a single state dominates the probability distribution, and it should also vanish in completely random systems, where all accessible states become equiprobable. Maximum complexity is thus expected to occur somewhere between these two limiting regimes, and the mathematical determination of such intermediate states remains one of the central problems in complexity science.

The earliest attempts to quantify complexity originated from algorithmic information theory through the works of Kolmogorov\cite{kolmogorov1965three}, Solomonoff\cite{solomonoff1964formal}, and Chaitin\cite{chaitin1966length}. Within this framework, the complexity of an object is identified with the length of the shortest algorithm capable of generating it\cite{lempelziv1976complexity, crutchfield1989inferring}. Although algorithmic complexity provides a mathematically rigorous notion of randomness, it possesses an important conceptual drawback: completely random sequences exhibit maximal algorithmic complexity. Consequently, algorithmic complexity measures unpredictability rather than organized structure and therefore fails to distinguish between noise and meaningful organization. Several alternative approaches have subsequently been proposed. Bennett\cite{bennett1988logical} introduced the concept of logical depth to characterize structures with a rich computational history, whereas Gell-Mann and Lloyd\cite{gellmann2004effective} developed the notion of effective complexity to separate regularities from random fluctuations. Despite their conceptual importance, these approaches remain difficult to apply to physical systems and often lack direct connections with statistical mechanics and information theory.

From the viewpoint of statistical physics, entropy naturally emerges as a measure of disorder. Shannon entropy\cite{shannon1948mathematical}, introduced in 1948, provides a quantitative measure of uncertainty and forms the foundation of modern information theory. Various generalizations, including the R\'enyi\cite{renyi1961measures} and Tsallis\cite{tsallis1988possible} entropies, have subsequently found applications throughout statistical mechanics, multifractal analysis\cite{hill1973diversity}, and quantum information theory\cite{petz1986, rajagopal1999}. Nevertheless, entropy alone cannot characterize complexity: maximal entropy corresponds to complete randomness rather than maximal organization, while minimal entropy corresponds to perfect order\cite{onicescu1966energie, beckschlogl1993thermodynamics}. Complexity therefore cannot be identified with disorder alone and necessarily requires an additional ingredient capable of capturing organization.

Among the many proposals introduced to overcome this limitation, one of the most influential is the statistical complexity proposed by L\'opez-Ruiz, Mancini, and Calbet (LMC)\cite{lopezruiz1995statistical} in 1995. Their construction rests on the idea that complexity should emerge from the competition between disorder and disequilibrium\cite{anteneodo1996features}: in its original form, LMC complexity is the product of a disorder measure (Shannon entropy) and a quantity describing the departure from equiprobability, expressed as a Euclidean distance in probability space. This framework successfully reproduces the intuitive requirement that complexity vanish at complete order and complete disorder while exhibiting nontrivial maxima in intermediate regimes\cite{calbet2001tendency, calbet2007extremum, calbet2009extremum}, and subsequent studies of its extremal distributions demonstrated that complexity-maximizing states are generally characterized by the coexistence of dominant and non-dominant probability components. More recently, explicit dynamical paths driving nonequilibrium trajectories toward these optimized configurations have been mapped out\cite{plastino2025dynamical}.

Since its introduction, the LMC complexity has been extensively applied across a wide range of systems. In atomic and molecular physics, complexity measures have been used to study electronic configurations, shell structures, and electron densities\cite{chatzisavvas2005information, sanudo2008statistical}, and comparable approaches have been extended to nuclear matter\cite{moustakidis2012statistical}, quantum systems\cite{sanudo2008complexity, lopezruiz2009generalized}, ecological networks, and financial time series. In parallel, Rosso and collaborators introduced complexity measures based on permutation entropy for the analysis of time series and dynamical systems, with applications ranging from finance and physiology to turbulence\cite{martin2006statistical, rosso2007distinguishing}. Numerous generalizations of the original LMC framework have also been proposed, replacing Shannon entropy and Euclidean disequilibrium with R\'enyi and Tsallis-based quantities\cite{catalan2002features, sanchezmoreno2014generalized, martin2000tsallis, shiner1999simple}, alongside related structural-entropy approaches to spatial localization in disordered and finite-dimensional systems\cite{pipek1992universal, varga2003renyi}.

Despite this extensive body of work, most existing studies have focused primarily on numerical applications and empirical investigations, while comparatively little attention has been devoted to the mathematical structure underlying statistical complexity itself. In particular, the determination of probability distributions that maximize complexity remains incompletely understood, and discrete and continuous probability spaces are usually treated separately, with no unified variational framework capable of describing both cases simultaneously. A further difficulty is more conceptual: the disequilibrium component of the original LMC formulation, based on Euclidean distance from equiprobability, has an appealing geometric interpretation but lacks a direct information-theoretic origin, and many generalized formulations introduce alternative disequilibrium functions in an essentially phenomenological manner.

From the perspective of information theory, however, concentration and localization are naturally quantified through information energy. Introduced by Onicescu in 1966\cite{onicescu1966energie}, information energy provides a measure complementary to entropy: whereas entropy increases with disorder, information energy increases with localization and peakedness. This duality suggests that information energy is a more natural candidate for describing disequilibrium within the framework of statistical complexity. Generalized statistical mechanics further highlights the role of escort probability distributions\cite{beckschlogl1993thermodynamics} in the study of multifractals, nonextensive systems, and generalized entropy measures; escort moments are intimately connected to R\'enyi entropies and thereby establish a direct bridge between entropy and concentration.

These observations motivate the present work. Rather than introducing generalized complexity through arbitrary modifications of the original LMC measure, we construct a framework built entirely on information-theoretic quantities. Disorder is quantified through the Hill number\cite{hill1973diversity} associated with Shannon entropy,
$$N_{1} = \exp(H_{1}),$$
which admits the interpretation of an effective number of accessible states. The concentration component is instead described through the information energy of the escort probability distribution\cite{onicescu1966energie, beckschlogl1993thermodynamics}, leading to the generalized disequilibrium
$$D_{q} = \frac{P_{2q}}{\left(P_{q}\right)^{2}}, \qquad P_{q} = \sum_{i=1}^{n}w_{i}^{q} \ \ \text{(discrete)}, \qquad P_{q} = \int\rho(x)^{q}\,dx \ \ \text{(continuous)}.$$
The generalized statistical complexity then takes the form
$$C_{q} = N_{1}D_{q} = \exp(H_{1})\,\frac{P_{2q}}{\left(P_{q}\right)^{2}}.$$
Using the connection between escort moments and R\'enyi entropies\cite{renyi1961measures}, the logarithm of the complexity can be expressed entirely in terms of entropic quantities,
$$\ln(C_{q}) = H_{1} + (2q-2)H_{q} - (2q-1)H_{2q},$$
giving rise to the variational functional
$$\Phi = H_{1} + (2q-2)H_{q} - (2q-1)H_{2q}.$$
Crucially, this functional does not arise from an arbitrary mathematical construction but follows directly from generalized statistical complexity itself, so the extremization problem investigated here possesses a clear physical and information-theoretic origin.

The principal objective of this paper is to determine the probability distributions that maximize this generalized complexity and to investigate their structural properties. We develop a unified variational approach, applicable simultaneously to discrete probability distributions and continuous probability densities, and show that although the two settings differ in the domains over which the probability variables are defined, both satisfy a common stationary equation arising from the same variational principle. A second objective is to characterize the geometric and structural properties of the corresponding extremal distributions: rather than focusing solely on numerical values of complexity, we ask whether complexity-maximizing states possess structural characteristics that are independent of the underlying probability space. Remarkably, the analysis reveals the emergence of a characteristic organization consisting of a highly concentrated core coexisting with a diffuse halo. This core--halo structure appears naturally in both discrete and continuous settings, arises directly from the competition between entropy and information energy, and is shown to be a universal feature of complexity-maximizing distributions rather than a consequence of any particular model or physical system.

The present work differs from previous studies in several important respects. First, the generalized disequilibrium employed here originates from the information energy of escort probability distributions and possesses a direct information-theoretic interpretation, rather than the phenomenological disequilibrium functions used in earlier generalizations. Second, the complexity functional is expressed entirely in terms of Shannon and R\'enyi entropies, establishing a natural connection between statistical complexity and generalized entropy theory. Third, the discrete and continuous problems are treated simultaneously within a unified variational framework rather than being investigated independently. Fourth, attention is directed toward the structure of extremal distributions themselves rather than merely toward the numerical evaluation of complexity. Finally, the analysis establishes the existence of a universal core--halo organization underlying complexity-maximizing states, suggesting that the emergence of such structures may constitute a fundamental and universal consequence of complexity optimization itself. In this sense, the present work aims not merely to introduce another generalized complexity measure, but to uncover the structural principles governing probability distributions of maximal complexity.

The remainder of this paper is organized as follows. Section 2 reviews the necessary concepts from probability theory, information theory, R\'enyi entropy, Hill numbers, escort probability distributions, and information energy. Section 3 formulates the generalized statistical complexity and derives the corresponding variational functional. Section 4 develops the unified variational framework and derives the stationary equations governing extremal distributions. Section 5 investigates the properties of the resulting solutions and establishes the emergence of core--halo structures. Section 6 addresses the continuous case and its relation to the discrete framework. Finally, Section 7 summarizes the principal results and discusses possible applications and future directions.

\section{Necessary Concepts}
\label{necessary concepts}
In order to formulate the generalized statistical complexity investigated throughout this work, it is necessary to review several concepts from probability theory and information theory. Although these quantities are individually well known, their combination gives rise to a natural variational structure whose extremal solutions constitute the principal subject of the present paper. The purpose of the present section is therefore twofold. First, we establish a common notation applicable to both discrete probability distributions and continuous probability densities. Second, we show how the generalized complexity functional investigated later in this work emerges naturally from the interplay between entropy and information energy.

\subsection{Probability distributions}

Probability theory provides the mathematical language used to describe uncertainty and randomness. Throughout the present work we shall consider both discrete probability distributions and continuous probability densities\cite{demoivre1733approximatio}.

For the discrete case, let
\begin{equation}
W=(w_1,w_2,\ldots,w_n)
\end{equation}
denote a probability distribution defined on a finite set of states. The probabilities satisfy
\begin{equation}
w_i\ge0,
\end{equation}
together with the normalization condition
\begin{equation}
\sum_{i=1}^{n}w_i=1.
\end{equation}
Thus, the set of all admissible discrete probability distributions forms the $(n-1)$-dimensional probability simplex
\begin{equation}
\Delta^{n-1}
=
\left\{
W\in\mathbb{R}^{n}:
w_i\ge0,\;
\sum_{i=1}^{n}w_i=1
\right\}.
\end{equation}

Similarly, in the continuous setting we consider a probability density function
\begin{equation}
\rho(x)\ge0
\end{equation}
satisfying
\begin{equation}
\int_{\Omega}\rho(x)\,dx=1,
\end{equation}
where $\Omega$ denotes the support of the distribution. The corresponding space of admissible continuous probability densities is
\begin{equation}
\mathcal{P}(\Omega)
=
\left\{
\rho\in L^{1}(\Omega):
\rho(x)\ge0,\;
\int_{\Omega}\rho(x)\,dx=1
\right\},
\end{equation}
which serves as the infinite-dimensional analogue of the finite probability simplex.

In the following, the notation $W$ will be used for discrete probability distributions, whereas $\rho(x)$ will denote continuous probability densities. Probability distributions constitute the starting point of information theory and statistical mechanics. The manner in which probability is distributed among the available states determines the amount of uncertainty, concentration, and organization present within the system. Consequently, the structure of a probability distribution plays a central role in any quantitative theory of complexity.

\subsection{Shannon entropy}The most fundamental measure of uncertainty associated with a probability distribution is the Shannon entropy, introduced in 1948\cite{shannon1948mathematical}. For discrete systems, Shannon entropy is defined by\begin{equation}H_{1}(W) = -\sum_{i=1}^{n}w_i\ln w_i,\end{equation}whereas in the continuous case one has\begin{equation}H_{1}(\rho) = -\int \rho(x)\ln\rho(x),dx.\end{equation}Shannon entropy quantifies the average amount of information required to specify the state of a system. It therefore provides a measure of uncertainty or disorder. The entropy vanishes when a single outcome occurs with certainty, corresponding to complete order. Conversely, entropy attains its maximum value when all states become equiprobable, corresponding to maximal disorder. Because entropy increases with randomness, it has become one of the cornerstones of statistical mechanics and information theory. Nevertheless, entropy alone is insufficient to characterize complexity. Indeed, both perfectly ordered systems and completely random systems are usually regarded as simple, despite possessing minimal and maximal entropy respectively. Complexity therefore requires additional ingredients beyond disorder alone.

\subsection{Rényi entropy}A one-parameter generalization of Shannon entropy was introduced by Rényi in 1961\cite{renyi1961measures}. For discrete probability distributions, the Rényi entropy of order $q$ is defined by\begin{equation}H_q(W) = \frac{1}{1-q}\ln\left(\sum_{i=1}^{n}w_i^q\right), \qquad q\neq1,\end{equation}while in the continuous case\begin{equation}H_q(\rho) = \frac{1}{1-q}\ln\left(\int\rho(x)^q,dx\right), \qquad q\neq1.\end{equation}The parameter $q$ determines the relative importance assigned to different regions of probability space. Larger values of $q$ emphasize highly probable events, whereas smaller values assign greater significance to low-probability states. An important property of Rényi entropy is that it continuously recovers Shannon entropy in the limit\begin{equation}\lim_{q\rightarrow1}H_q = H_1.\end{equation}Consequently, the family of Rényi entropies provides a hierarchy of generalized uncertainty measures and has found applications in statistical mechanics, multifractal theory, quantum information, and machine learning.

\subsection{Hill numbers}Although entropy provides a natural measure of uncertainty, it lacks a direct interpretation in terms of the effective number of accessible states. This motivates the introduction of Hill numbers\cite{hill1973diversity}. For a Rényi entropy of order $q$, the corresponding Hill number is defined by\begin{equation}N_q = \exp(H_q).\end{equation}In particular, for Shannon entropy one obtains\begin{equation}N_1 = \exp(H_1).\end{equation}The quantity $N_1$ admits a particularly appealing interpretation as the effective number of equally probable states represented by the probability distribution. Unlike entropy itself, Hill numbers are always positive and possess a natural meaning in terms of diversity or effective multiplicity. For this reason, we shall employ\begin{equation}N_1 = e^{H_1}\end{equation}as the disorder component of generalized statistical complexity.

\subsection{Escort probability distributions}

Escort probability distributions\cite{tsallis1998role} constitute one of the central ingredients of generalized statistical mechanics and information theory. Originally introduced in the context of multifractals and later extensively employed within the framework of non-extensive statistical mechanics, escort probabilities provide a mechanism for assigning different weights to various regions of probability space.

For a discrete probability distribution

\begin{equation}
W=(w_1,w_2,\ldots,w_n),
\end{equation}

the escort probability distribution of order $q$ is defined by

\begin{equation}
\tilde{w}_i^{(q)}
=
\frac{w_i^q}
{\sum_{j=1}^{n}w_j^q},
\end{equation}

where the denominator guarantees normalization,

\begin{equation}
\sum_{i=1}^{n}
\tilde{w}_i^{(q)}
=1.
\end{equation}

Similarly, for continuous probability densities one defines

\begin{equation}
\tilde{\rho}^{(q)}(x)
=
\frac{\rho(x)^q}
{\int \rho(x)^q\,dx},
\end{equation}

which satisfies

\begin{equation}
\int
\tilde{\rho}^{(q)}(x)\,dx
=
1.
\end{equation}

The parameter $q$ determines the relative importance assigned to different probability regions. For $q>1$, highly probable states are emphasized, whereas for $0<q<1$, low-probability states receive comparatively larger weights.

Escort distributions play an important role in generalized entropy theory, multifractal analysis, quantum information theory, and non-extensive statistical mechanics. In the present work, they provide the natural framework for constructing generalized measures of concentration and disequilibrium.

\subsection{Information energy}

A quantity complementary to entropy was introduced by Onicescu under the name of information energy\cite{onicescu1966energie}. Unlike entropy, which measures uncertainty and disorder, information energy quantifies concentration and localization.

For discrete probability distributions, the information energy is defined by

\begin{equation}
E(W)
=
\sum_{i=1}^{n}
w_i^2,
\end{equation}

while for continuous probability densities one has

\begin{equation}
E(\rho)
=
\int
\rho(x)^2\,dx.
\end{equation}

Information energy attains its maximum value when the probability distribution becomes completely concentrated and decreases as the distribution approaches equiprobability. Consequently, information energy behaves oppositely to entropy.

Whereas entropy quantifies disorder, information energy quantifies concentration. The competition between these two quantities forms the basis of numerous measures of statistical complexity.

Information energy has found applications in information theory, atomic physics, quantum chemistry, and probability theory, and is often interpreted as a measure of peakedness or disequilibrium.

\subsection{Generalized information energy and disequilibrium}

Since escort probability distributions naturally modify the weighting of probability space, it is convenient to evaluate the information energy of the escort distribution itself.

For the discrete case, the corresponding quantity becomes

\begin{equation}
E_q(W)
=
\sum_{i=1}^{n}
\left(
\tilde{w}_i^{(q)}
\right)^2.
\end{equation}

Substituting the definition of escort probabilities gives

\begin{equation}
E_q(W)
=
\sum_{i=1}^{n}
\left(
\frac{w_i^q}
{\sum_{j=1}^{n}w_j^q}
\right)^2.
\end{equation}

Therefore,

\begin{equation}
E_q(W)
=
\frac
{\sum_{i=1}^{n}w_i^{2q}}
{\left(
\sum_{i=1}^{n}w_i^q
\right)^2}.
\end{equation}

Similarly, in the continuous setting one obtains

\begin{equation}
E_q(\rho)
=
\int
\left(
\tilde{\rho}^{(q)}(x)
\right)^2dx,
\end{equation}

which yields

\begin{equation}
E_q(\rho)
=
\frac
{\int \rho(x)^{2q}dx}
{\left(
\int \rho(x)^qdx
\right)^2}.
\end{equation}

Introducing the notation

\begin{equation}
P_q
=
\sum_{i=1}^{n}w_i^q
\end{equation}

for discrete systems and

\begin{equation}
P_q
=
\int \rho(x)^qdx
\end{equation}

for continuous systems, the generalized information energy assumes the compact form

\begin{equation}
D_q
=
\frac{P_{2q}}
{(P_q)^2}.
\end{equation}

In the present work, this quantity will be interpreted as the generalized disequilibrium component of statistical complexity.

For $q=1$, one recovers the ordinary Onicescu information energy,

\begin{equation}
D_1
=
P_2,
\end{equation}

showing that the present formulation constitutes a natural generalization of the original concentration measure.

\section{Generalized Statistical Complexity}
\label{generalized statistical complexity}
\subsection{Construction of the generalized complexity}
Having introduced the necessary concepts from probability theory and information theory, we now proceed to construct the generalized statistical complexity investigated throughout the remainder of this work. The original statistical complexity proposed by López-Ruiz, Mancini, and Calbet was based on the idea that complexity arises from the interplay between disorder and disequilibrium. In the present work, we adopt the same philosophy but employ information-theoretic quantities throughout .As discussed in the previous section, disorder will be quantified by the Hill number associated with Shannon entropy,\begin{equation}N_{1} = \exp(H_{1}),\end{equation}whereas the concentration component is described through the generalized information energy\begin{equation}D_q = \frac{P_{2q}}{\left(P_q\right)^2}.\end{equation}Consequently, we define the generalized statistical complexity by\begin{equation}C_q = N_1D_q,\end{equation}or equivalently,\begin{equation}C_q = \exp(H_1)\frac{P_{2q}}{\left(P_q\right)^2}.\end{equation}This definition applies equally to discrete probability distributions and continuous probability densities. In the discrete case,\begin{equation}P_q = \sum_{i=1}^{n}w_i^q,\end{equation}whereas for continuous probability densities one has\begin{equation}P_q = \int\rho(x)^q,dx.\end{equation}The quantity $C_q$ combines two competing tendencies. The factor $e^{H_1}$ favors disorder and increases as the probability distribution becomes more uniform, whereas the generalized information energy $D_q$ favors concentration and localization. Complexity therefore emerges from the competition between these opposing mechanisms.

\subsection{Logarithmic representation}

Although the generalized statistical complexity introduced in the previous section possesses a natural multiplicative form, it is mathematically advantageous to consider its logarithm. Since the logarithmic function is strictly increasing on the positive real line, taking the logarithm preserves the locations of maxima and minima. Consequently, the optimization of the generalized statistical complexity may equivalently be carried out by studying its logarithm.Starting from\begin{equation}C_q = \exp(H_1) \frac{P_{2q}}{\left(P_q\right)^2},\end{equation}we take the natural logarithm of both sides to obtain\begin{equation}\ln C_q = \ln \left( \exp(H_1) \frac{P_{2q}}{\left(P_q\right)^2} \right).\end{equation}Using the elementary properties of logarithms,\begin{equation}\ln(ab) = \ln a + \ln b,\end{equation}and\begin{equation}\ln\left(\frac{a}{b}\right) = \ln a - \ln b,\end{equation}together with\begin{equation}\ln(a^n) = n\ln a,\end{equation}we immediately find\begin{equation}\ln C_q = \ln\left(e^{H_1}\right) + \ln P_{2q} - 2\ln P_q.\end{equation}Since\begin{equation}\ln\left(e^{H_1}\right) = H_1,\end{equation}the logarithm of the generalized statistical complexity assumes the form\begin{equation}\label{eq:log_Cq}\boxed{\ln C_q = H_1 + \ln P_{2q} - 2\ln P_q}\end{equation}which holds for both discrete and continuous probability distributions.The importance of this representation lies in the fact that the multiplicative structure of the complexity measure has now been transformed into an additive one. Such additive forms are considerably more convenient for variational analysis and optimization problems, since derivatives of sums are significantly simpler than derivatives of products. Furthermore, the quantities $P_q$ and $P_{2q}$ appearing in Eq.~(\ref{eq:log_Cq}) are directly related to Rényi entropies. Consequently, the logarithmic representation allows the generalized statistical complexity to be expressed entirely in terms of entropy measures. Because the logarithmic function is monotonic, maximizing $C_q$ is completely equivalent to maximizing $\ln C_q$. Therefore,\begin{equation}\arg\max C_q = \arg\max \ln C_q.\end{equation}In the following subsection, the quantities $\ln P_q$ and $\ln P_{2q}$ will be rewritten in terms of Rényi entropies, leading to a purely entropic representation of the generalized statistical complexity and ultimately to the variational functional investigated throughout the remainder of this work.

\subsection{Representation in terms of Rényi entropies}The logarithmic representation obtained in the previous subsection may be rewritten entirely in terms of Rényi entropies. Recall that the Rényi entropy of order $q$ is defined by\begin{equation}H_q = \frac{1}{1-q}\ln P_q, \qquad q\neq1,\end{equation}which immediately yields\begin{equation}\ln P_q = (1-q)H_q.\end{equation}Similarly,\begin{equation}\ln P_{2q} = (1-2q)H_{2q}.\end{equation}Substituting these expressions into Eq.~(3.10), we obtain\begin{align}\ln C_q &= H_1 + (1-2q)H_{2q} - 2(1-q)H_q= H_1 + (2q-2)H_q - (2q-1)H_{2q}.\end{align}Therefore, the logarithm of the generalized statistical complexity admits the purely entropic representation\begin{equation}\boxed{\ln C_q = H_1 + (2q-2)H_q - (2q-1)H_{2q}}.\end{equation}This expression depends exclusively upon Shannon entropy and two Rényi entropies and therefore establishes a direct connection between generalized statistical complexity and the hierarchy of Rényi information measures. Motivated by this representation, we introduce the functional\begin{equation}\Phi = H_1 + (2q-2)H_q - (2q-1)H_{2q},\end{equation}which will play a central role throughout the remainder of the present work. Since the logarithmic function is strictly monotonic,\begin{equation}\arg\max C_q = \arg\max \ln C_q,\end{equation}and hence maximizing the generalized statistical complexity is equivalent to maximizing the functional $\Phi$. Consequently, the problem of determining probability distributions of maximal complexity reduces to the variational problem\begin{equation}\max \Phi,\end{equation}subject to the normalization constraint\begin{equation}\sum_i w_i = 1\end{equation}in the discrete setting and\begin{equation}\int\rho(x),dx = 1\end{equation}in the continuous setting. Thus, the extremization of generalized statistical complexity may be formulated entirely as an entropy optimization problem.

\subsection{Definition of the complexity functional}
The previous subsection established that the logarithm of the generalized statistical complexity may be expressed entirely in terms of Shannon entropy and Rényi entropies. Consequently, the optimization of the generalized statistical complexity can be reformulated as an equivalent problem involving only entropy functionals.

We introduce the quantity
\begin{equation}
\Phi = H_{1} + (2q-2)H_{q} - (2q-1)H_{2q},
\label{eq3.4.1}
\end{equation}
which we shall refer to as the \textit{complexity functional}.

Since the logarithmic function is strictly monotonic over the positive real numbers, the extrema of the generalized statistical complexity coincide with those of its logarithm. Therefore,
\begin{equation}
\arg\max C_q = \arg\max \ln C_q = \arg\max \Phi.
\end{equation}

The study of generalized statistical complexity is thus reduced to the investigation of the extrema of the functional $\Phi$.

An important consequence of this reformulation is that the problem of complexity maximization becomes an entropy optimization problem. Unlike the generalized statistical complexity itself, which possesses a multiplicative structure, the functional $\Phi$ is additive and therefore considerably more convenient for analytical investigations.

Furthermore, the functional $\Phi$ depends exclusively upon Shannon entropy and two Rényi entropies of orders q and 2q. Consequently, it establishes a direct connection between generalized statistical complexity and the hierarchy of Rényi information measures.

Another remarkable feature of Eq.~(\ref{eq3.4.1}) is the coexistence of competing contributions. The Shannon entropy term
\begin{equation}
H_1
\end{equation}
promotes disorder and tends to favor more uniform probability distributions. In contrast, the contribution
\begin{equation}
-(2q-1)H_{2q}
\end{equation}
introduces an opposing tendency associated with concentration and localization. The intermediate term
\begin{equation}
(2q-2)H_q
\end{equation}
couples these competing effects and provides the mechanism through which non-trivial complexity-maximizing states emerge.

The structure of the complexity functional therefore reflects the fundamental balance between disorder and concentration that underlies statistical complexity. This competition ultimately determines the geometry and organization of the corresponding extremal probability distributions.

Although the generalized statistical complexity was originally introduced through the product of disorder and disequilibrium, Eq.~(\ref{eq3.4.1}) shows that its optimization is completely equivalent to the extremization of the functional $\Phi$. Consequently, all subsequent analysis may be carried out entirely within the framework of entropy functionals.

Moreover, because the derivation of Eq.~(\ref{eq3.4.1}) does not depend upon whether the underlying probability space is discrete or continuous, the same complexity functional applies to both settings. This observation provides the foundation for the unified treatment developed in the remainder of this work.

In the next subsection, the complexity optimization problem will be formulated as a constrained variational problem. The corresponding stationary conditions will then be investigated separately for discrete probability distributions and continuous probability densities, ultimately revealing the universal structure of complexity-maximizing states.

\subsection{Statement of the variational problem}The previous subsection showed that maximizing the generalized statistical complexity is entirely equivalent to maximizing the complexity functional\begin{equation}\Phi = H_{1} + (2q-2)H_{q} - (2q-1)H_{2q}.\end{equation}Consequently, the problem of determining probability distributions of maximal complexity reduces to an optimization problem\cite{gelfand1963calculus, coverthomas2006elements} for the functional $\Phi$. However, not every collection of numbers constitutes an admissible probability distribution. In both discrete and continuous settings, probability measures are required to satisfy normalization constraints. Therefore, the extremization of $\Phi$ must be carried out under the condition that the total probability remains equal to unity. For discrete probability distributions\begin{equation}W = (w_1,w_2,\ldots,w_n),\end{equation}the admissible probability simplex is characterized by\begin{equation}w_i \geq 0,\end{equation}together with\begin{equation}\sum_{i=1}^{n}w_i = 1.\end{equation}Accordingly, the discrete complexity maximization problem may be formulated as\begin{equation}\max_{W}\Phi(W),\end{equation}subject to\begin{equation}\sum_{i=1}^{n}w_i = 1.\end{equation}Similarly, in the continuous setting, probability densities satisfy\begin{equation}\rho(x) \geq 0,\end{equation}and\begin{equation}\int\rho(x),dx = 1.\end{equation}The corresponding optimization problem becomes\begin{equation}\max_{\rho}\Phi(\rho),\end{equation}subject to\begin{equation}\int\rho(x),dx = 1.\end{equation}Although the discrete and continuous formulations differ in appearance, they possess the same underlying structure. In both cases, one seeks stationary points of the same entropy functional under a normalization constraint. This observation suggests that the two problems should admit a unified variational treatment. In particular, the normalization condition may be incorporated through the method of Lagrange multipliers\cite{bertsekas1996constrained}, thereby transforming the constrained optimization problem into an unconstrained one. The resulting stationary conditions will constitute the starting point of the mathematical analysis developed in the following section. As we shall see, despite the apparent differences between discrete probability vectors and continuous probability densities, both cases ultimately lead to the same algebraic stationary equation. This common structure forms the basis for the unified treatment presented throughout the remainder of the present work.

\section{Variational Formulation}
\label{variational formulation}
\subsection{Discrete case}We begin with the discrete probability distribution\begin{equation}W=(w_1,w_2,\ldots,w_n),\end{equation}subject to\begin{equation}w_i\geq0,\end{equation}and\begin{equation}\sum_{i=1}^{n}w_i=1.\end{equation}As established in the previous section, the problem of maximizing generalized statistical complexity is equivalent to maximizing the complexity functional\begin{equation}\Phi = H_{1} + (2q-2)H_{q} - (2q-1)H_{2q},\end{equation}under the normalization constraint\begin{equation}\sum_{i=1}^{n}w_i=1.\end{equation}The constraint may be incorporated through the method of Lagrange multipliers. Introducing a multiplier $\lambda$, we define the auxiliary functional\begin{equation}\mathcal{L} = \Phi + \lambda \left( \sum_{i=1}^{n}w_i-1 \right).\end{equation}Substituting the explicit expressions for Shannon entropy and Rényi entropies yields\begin{align}\mathcal{L} &= -\sum_{i=1}^{n} w_i\ln w_i \nonumber\quad + (2q-2) \frac{1}{1-q} \ln \left( \sum_{i=1}^{n}w_i^q \right) \nonumber\quad - (2q-1) \frac{1}{1-2q} \ln \left( \sum_{i=1}^{n}w_i^{2q} \right) \nonumber\quad + \lambda \left( \sum_{i=1}^{n}w_i-1 \right).\end{align}Simplifying the coefficients gives\begin{equation}\mathcal{L} = -\sum_{i=1}^{n} w_i\ln w_i - 2 \ln \left( \sum_{i=1}^{n}w_i^q \right) + \ln \left( \sum_{i=1}^{n}w_i^{2q} \right) + \lambda \left( \sum_{i=1}^{n}w_i-1 \right).\end{equation}The stationary points of the complexity functional are obtained by requiring\begin{equation}\frac{\partial\mathcal{L}}{\partial w_i} = 0, \qquad i=1,\ldots,n.\end{equation}Therefore,\begin{equation}\frac{\partial}{\partial w_i} \left( -\sum_{j=1}^{n} w_j\ln w_j \right) = -(\ln w_i+1),\end{equation}while\begin{equation}\frac{\partial}{\partial w_i} \left[ -2 \ln \left( \sum_{j=1}^{n}w_j^q \right) \right] = -\frac{2qw_i^{q-1}}{\sum_{j=1}^{n}w_j^q},\end{equation}and\begin{equation}\frac{\partial}{\partial w_i} \left[ \ln \left( \sum_{j=1}^{n}w_j^{2q} \right) \right] = \frac{2qw_i^{2q-1}}{\sum_{j=1}^{n}w_j^{2q}}.\end{equation}Finally,\begin{equation}\frac{\partial}{\partial w_i} \left[ \lambda \left( \sum_{j=1}^{n}w_j-1 \right) \right] = \lambda.\end{equation}Combining all contributions and imposing the stationarity condition gives\begin{equation}-(\ln w_i+1) - \frac{2qw_i^{q-1}}{\sum_{j=1}^{n}w_j^q} + \frac{2qw_i^{2q-1}}{\sum_{j=1}^{n}w_j^{2q}} + \lambda = 0.\end{equation}Multiplying the entire equation by $-1$ and absorbing constants into the Lagrange multiplier, one obtains\begin{equation}\label{eq:stationary_discrete}\ln w_i + A_qw_i^{q-1} - B_qw_i^{2q-1} + C = 0,\end{equation}where\begin{equation}A_q = \frac{2q}{\sum_{j=1}^{n}w_j^q},\end{equation}\begin{equation}B_q = \frac{2q}{\sum_{j=1}^{n}w_j^{2q}},\end{equation}and\begin{equation}C = 1-\lambda.\end{equation}Equation~(\ref{eq:stationary_discrete}) constitutes the stationary equation governing complexity-maximizing discrete probability distributions. In the next subsection we shall show that the continuous problem leads to precisely the same algebraic structure.

\subsection{Continuous case}We now consider the corresponding problem in the continuous setting. Let\begin{equation}\rho(x)\geq0\end{equation}be a probability density satisfying the normalization condition\begin{equation}\int\rho(x),dx=1.\end{equation}As in the discrete case, the problem of maximizing generalized statistical complexity is equivalent to maximizing the complexity functional\begin{equation}\Phi = H_{1} + (2q-2)H_q - (2q-1)H_{2q},\end{equation}subject to\begin{equation}\int\rho(x),dx=1.\end{equation}Introducing a Lagrange multiplier $\lambda$, we define the auxiliary functional\begin{equation}\mathcal L = \Phi + \lambda \left( \int\rho(x),dx-1 \right).\end{equation}Substituting the explicit expressions for the Shannon and Rényi entropies yields\begin{align}
\mathcal L &= -\int \rho(x)\ln\rho(x)\,dx \nonumber\quad + (2q-2) \frac{1}{1-q} \ln \left( \int \rho(x)^q\,dx \right) \nonumber - (2q-1) \frac{1}{1-2q} \ln \left( \int \rho(x)^{2q}\,dx \right) \nonumber\quad + \lambda \left( \int \rho(x)\,dx-1 \right).
\end{align}Simplifying the coefficients gives\begin{equation}\mathcal L = -\int \rho(x)\ln\rho(x),dx - 2 \ln \left( \int \rho(x)^q,dx \right) + \ln \left( \int \rho(x)^{2q},dx \right) + \lambda \left( \int \rho(x),dx-1 \right).\end{equation}The stationary points are determined by requiring\begin{equation}\frac{\delta\mathcal L}{\delta\rho(x)} = 0.\end{equation}Evaluating the functional derivatives term by term, we first obtain\begin{equation}\frac{\delta}{\delta\rho(x)} \left[ -\int \rho(y)\ln\rho(y),dy \right] = -(\ln\rho(x)+1).\end{equation}Furthermore,\begin{equation}\frac{\delta}{\delta\rho(x)} \left[ -2 \ln \left( \int \rho(y)^q,dy \right) \right] = -\frac{2q\rho(x)^{q-1}}{\int \rho(y)^q,dy},\end{equation}while\begin{equation}\frac{\delta}{\delta\rho(x)} \left[ \ln \left( \int \rho(y)^{2q},dy \right) \right] = \frac{2q\rho(x)^{2q-1}}{\int \rho(y)^{2q},dy}.\end{equation}Finally,\begin{equation}\frac{\delta}{\delta\rho(x)} \left[ \lambda \left( \int \rho(y),dy-1 \right) \right] = \lambda.\end{equation}Combining all contributions and imposing the stationarity condition gives\begin{equation}-(\ln\rho(x)+1) - \frac{2q\rho(x)^{q-1}}{\int \rho(y)^q,dy} + \frac{2q\rho(x)^{2q-1}}{\int \rho(y)^{2q},dy} + \lambda = 0.\end{equation}Multiplying the entire equation by $-1$ and absorbing constants into the Lagrange multiplier yields\begin{equation}\label{eq:stationary_continuous}\ln\rho(x) + A_q^{(\mathrm{cont})} \rho(x)^{q-1} - B_q^{(\mathrm{cont})} \rho(x)^{2q-1} + C = 0,\end{equation}where\begin{equation}A_q^{(\mathrm{cont})} = \frac{2q}{\displaystyle\int\rho(y)^q,dy},\end{equation}\begin{equation}B_q^{(\mathrm{cont})} = \frac{2q}{\displaystyle\int\rho(y)^{2q},dy},\end{equation}and\begin{equation}C = 1-\lambda.\end{equation}Thus, the extremization of the complexity functional in the continuous setting leads to a nonlinear transcendental equation whose structure is strikingly similar to that obtained in the discrete case. This observation strongly suggests the existence of a common mathematical framework governing both discrete probability distributions and continuous probability densities. In the next subsection we shall show that both cases are in fact described by a single universal stationary equation.

\subsection{Unified stationary equation}The preceding subsections demonstrated that the extremization of the complexity functional leads to nonlinear stationary equations in both the discrete and continuous settings. For discrete probability distributions, the stationary condition assumes the form\begin{equation}\ln w_i + A_q^{(\mathrm{disc})} w_i^{q-1} - B_q^{(\mathrm{disc})} w_i^{2q-1} + C = 0,\end{equation}where\begin{equation}A_q^{(\mathrm{disc})} = \frac{2q}{\displaystyle\sum_jw_j^q},\end{equation}and\begin{equation}B_q^{(\mathrm{disc})} = \frac{2q}{\displaystyle\sum_jw_j^{2q}}.\end{equation}Similarly, in the continuous setting one obtains\begin{equation}\ln\rho(x) + A_q^{(\mathrm{cont})} \rho(x)^{q-1} - B_q^{(\mathrm{cont})} \rho(x)^{2q-1} + C = 0,\end{equation}with\begin{equation}A_q^{(\mathrm{cont})} = \frac{2q}{\displaystyle\int\rho(y)^q,dy},\end{equation}and\begin{equation}B_q^{(\mathrm{cont})} = \frac{2q}{\displaystyle\int\rho(y)^{2q},dy}.\end{equation}Although these equations arise from apparently different probability spaces, their algebraic structures are identical. The only distinction lies in the explicit expressions for the coefficients, which depend upon sums in the discrete case and integrals in the continuous case. Therefore, by introducing the generic variable\begin{equation}x,\end{equation}representing either a probability weight $w_i$ or a probability density value $\rho(x)$, both stationary conditions may be written in the unified form\begin{equation}\label{eq:unified_stationary}\boxed{\ln x + A_q x^{q-1} - B_q x^{2q-1} + C = 0},\end{equation}where $A_q$, $B_q$, and $C$ are constants determined by the normalization condition and the corresponding escort probabilities. Equation~(\ref{eq:unified_stationary}) constitutes the fundamental stationary equation governing complexity-maximizing states. Remarkably, the same nonlinear transcendental equation emerges in both discrete and continuous probability spaces. Consequently, many qualitative properties of the solutions are independent of whether the underlying system is discrete or continuous. This observation provides the basis for a unified treatment of complexity-maximizing probability distributions. Rather than studying discrete and continuous systems separately, it becomes possible to investigate the common mathematical properties encoded in Eq.~(\ref{eq:unified_stationary}). In particular, the structure of Eq.~(\ref{eq:unified_stationary}) suggests that the geometry of its solutions should be determined by the competition between the logarithmic term and the two nonlinear power-law contributions. Understanding this competition will constitute the main objective of the following section. Accordingly, we now turn to a detailed analysis of the stationary equation and investigate the qualitative properties of its solutions.
\begin{remark}
The stationarity equations derived above characterize all critical points of the complexity functional. Since the present work is concerned with the determination of the global complexity maximizer, we do not pursue a local classification through bordered Hessian analysis. Instead, the nature of the stationary solutions will emerge from the global root structure of the stationary equation and the subsequent $m$-core analysis developed in the following sections.
\end{remark}
\section{Analysis of the Stationary Equation} 
\label{analysis of the stationary equation}
\subsection{Existence and basic properties of stationary solutions}The previous section established that both discrete and continuous complexity maximization problems are governed by the same nonlinear transcendental equation,\begin{equation}\ln x + A_qx^{q-1} - B_qx^{2q-1} + C = 0,\end{equation}where\begin{equation}A_q > 0,\end{equation}\begin{equation}B_q > 0,\end{equation}and\begin{equation}x > 0.\end{equation}The positivity of $A_q$ and $B_q$ follows directly from their definitions in terms of escort probabilities,\begin{equation}A_q = \frac{2q}{P_q},\end{equation}and\begin{equation}B_q = \frac{2q}{P_{2q}},\end{equation}since escort probabilities are positive for every admissible probability distribution\cite{hardy1952inequalities}. For convenience, we re-introduce the function\begin{equation}f(x) = \ln x + A_qx^{q-1} - B_qx^{2q-1} + C.
\label{eq5.1.7}
\end{equation}The stationary points of generalized statistical complexity correspond precisely to the positive roots of\begin{equation}f(x) = 0.\end{equation}Therefore, the existence of complexity-maximizing states is equivalent to the existence of positive solutions of Eq.~(\ref{eq5.1.7}). We first examine the asymptotic behavior of the function. As\begin{equation}x\rightarrow0^+,\end{equation}one has\begin{equation}\ln x\rightarrow-\infty.\end{equation}Hence,\begin{equation}f(x)\rightarrow-\infty, \qquad x\rightarrow0^+,\end{equation}provided the logarithmic divergence dominates the power-law contributions. On the other hand, as\begin{equation}x\rightarrow\infty,\end{equation}the highest-order power term dominates and therefore\begin{equation}f(x) \sim -B_qx^{2q-1},\end{equation}which implies\begin{equation}f(x)\rightarrow-\infty, \qquad x\rightarrow\infty,\end{equation}for\begin{equation}q > \frac{1}{2}.\end{equation}Consequently, the function is negative at both extremes of its domain. Therefore, positive roots can only occur if the function attains positive values for some intermediate range of $x$. This observation immediately implies that stationary solutions, whenever they exist, must arise through the competition between the logarithmic contribution and the two nonlinear power-law terms. The asymptotic analysis shows that $f(x)$ is negative at both boundaries of its domain. Consequently, the existence of stationary Core--Halo solutions reduces to proving the existence of an interior point $x_0 > 0$ at which $f(x_0) > 0$. Once this is established, continuity together with the Intermediate Value Theorem\cite{rudin1976principles} implies the existence of two distinct positive roots, one on each side of $x_0$. These roots correspond to the halo and core branches of the stationary distribution.
 Hence, the problem of understanding complexity-maximizing states reduces to determining the geometry of the function\begin{equation}f(x) = \ln x + A_qx^{q-1} - B_qx^{2q-1} + C.
 \label{eq5.1.16}
 \end{equation}In particular, the number and nature of stationary solutions are governed by the extrema of this function. Therefore, the derivative of Eq.~(\ref{eq5.1.16}) will play a central role in the subsequent analysis.

\subsection{Shape analysis of the stationary equation}In order to investigate the geometry of the stationary solutions, we introduce the function\begin{equation}\label{eq:fx_def}f(x) = \ln x + A_qx^{q-1} - B_qx^{2q-1} + C, \qquad x>0,\end{equation}whose positive roots determine the stationary probability values. Since the constant $C$ merely translates the graph vertically, the shape of $f(x)$ is governed by the interplay between the logarithmic term and the two power-law contributions.Differentiating with respect to $x$ yields\begin{equation}\label{eq:f_prime}f'(x) = \frac{1}{x} + (q-1)A_qx^{q-2} - (2q-1)B_qx^{2q-2}.\end{equation}The extrema of $f(x)$ are determined by\begin{equation}f'(x) = 0.\end{equation}Because the domain is restricted to positive values, it is convenient to multiply the derivative by $x$, which preserves the location of its zeros. Accordingly, we introduce the auxiliary function\begin{equation}\label{eq:gx_def}g(x) = xf'(x).\end{equation}Substituting Eq.~(\ref{eq:f_prime}) into Eq.~(\ref{eq:gx_def}) gives\begin{equation}\label{eq:gx_explicit}g(x) = 1 + (q-1)A_qx^{q-1} - (2q-1)B_qx^{2q-1}.\end{equation}Hence,\begin{equation}f'(x) = 0 \iff g(x) = 0.\end{equation}Therefore, the critical points of $f(x)$ are precisely the positive roots of $g(x)$. To understand the geometry of $g(x)$, we differentiate once more and obtain\begin{equation}g'(x) = (q-1)^2A_qx^{q-2} - (2q-1)^2B_qx^{2q-2}.\end{equation}The extrema of $g(x)$ are determined by\begin{equation}g'(x) = 0,\end{equation}which leads to\begin{equation}(q-1)^2A_qx^{q-2} = (2q-1)^2B_qx^{2q-2}.\end{equation}Since $x>0$, division by $x^{q-2}$ yields\begin{equation}x^{q} = \frac{(q-1)^2A_q}{(2q-1)^2B_q}.\end{equation}Consequently, the critical point of $g(x)$ is located at\begin{equation}\label{eq:xc}x_c = \left( \frac{(q-1)^2A_q}{(2q-1)^2B_q} \right)^{1/q}.\end{equation}Equation~(\ref{eq:xc}) demonstrates that the geometry of the auxiliary function $g(x)$ is completely determined by the competition between the two power-law contributions. Since the extrema of $f(x)$ are given by the roots of $g(x)$, the analysis of $g(x)$ plays a central role in understanding the number and arrangement of stationary solutions. In the following subsection, we shall investigate the critical point structure of $g(x)$ in detail and determine the possible branching patterns of the stationary equation.

\subsection{Critical point structure of the auxiliary function}
The preceding subsection established that the extrema of the stationary function
\begin{equation}
f(x) = \ln x + A_qx^{q-1} - B_qx^{2q-1} + C
\end{equation}
are determined by the roots of the auxiliary function
\begin{equation}
g(x) = 1 + (q-1)A_qx^{q-1} - (2q-1)B_qx^{2q-1}.
\end{equation}

Furthermore, the extrema of $g(x)$ are governed by
\begin{equation}
g'(x) = (q-1)^2A_qx^{q-2} - (2q-1)^2B_qx^{2q-2},
\end{equation}
whose unique positive root is located at
\begin{equation}
\label{eq:xc_root}
x_c = \left( \frac{(q-1)^2A_q}{(2q-1)^2B_q} \right)^{1/q}.
\end{equation}

Therefore, the auxiliary function possesses a unique critical point and hence is unimodal.

For $q>1$, one has
\begin{equation}
\lim_{x\rightarrow0^+}g(x) = 1,
\end{equation}
so that $g(0^+) = 1 > 0$.

Since
\begin{equation}
g'(x)>0, \qquad 0<x<x_c,
\end{equation}
the function is strictly increasing on the interval $(0,x_c)$.

Consequently,
\begin{equation}
g(x_c) > g(0^+) = 1,
\end{equation}
which implies $g(x_c)>0$.

Similarly,
\begin{equation}
g'(x)<0, \qquad x>x_c,
\end{equation}
and therefore the auxiliary function decreases monotonically for $x>x_c$.

Moreover,
\begin{equation}
\lim_{x\rightarrow\infty}g(x) = -\infty.
\end{equation}

Hence the graph of $g(x)$ rises from the value $1$, reaches a unique maximum at $x=x_c$, and subsequently decreases toward negative infinity.

By continuity, it follows that the equation
\begin{equation}
g(x)=0
\end{equation}
possesses exactly one positive solution\cite{rudin1976principles, boyd2004convex}.

Since
\begin{equation}
f'(x) = \frac{g(x)}{x},
\end{equation}
and $x>0$, the signs of $f'(x)$ and $g(x)$ coincide. Therefore, $f'(x)>0$ before the unique root of $g(x)$, while $f'(x)<0$ afterwards.

Consequently, the stationary function $f(x)$ possesses exactly one critical point. Furthermore, this critical point corresponds to a maximum of $f(x)$.

Therefore, the qualitative structure of the stationary equation is remarkably simple: the function $f(x)$ increases up to a unique maximum\cite{rockafellar1970convex} and then decreases monotonically.

The location and height of this maximum are determined entirely by the behavior of the auxiliary function $g(x)$.

In the next subsection, we investigate the asymptotic behavior of the stationary function itself and establish the conditions under which multiple positive solutions of
\begin{equation}
f(x)=0
\end{equation}
may occur.

\subsection{Trivial boundary configurations}
Before investigating the number of positive roots of the stationary equation, it is instructive to examine the two extremal configurations of the probability simplex. These configurations correspond respectively to complete disorder and complete concentration. As will be shown below, both of them yield vanishing generalized complexity.

\subsubsection*{Uniform distribution}
Consider the uniform distribution
\begin{equation}
w_i = \frac{1}{N}, \qquad i=1,2,\ldots,N.
\end{equation}

The Shannon entropy becomes
\begin{equation}
H_1 = -\sum_{i=1}^{N} \frac{1}{N} \ln\left(\frac{1}{N}\right) = \ln N.
\end{equation}

The escort probability moments are
\begin{equation}
P_q = \sum_{i=1}^{N} \left(\frac{1}{N}\right)^q = N^{1-q},
\end{equation}
and
\begin{equation}
P_{2q} = \sum_{i=1}^{N} \left(\frac{1}{N}\right)^{2q} = N^{1-2q}.
\end{equation}

Consequently,
\begin{equation}
\ln P_q = (1-q)\ln N,
\end{equation}
and
\begin{equation}
\ln P_{2q} = (1-2q)\ln N.
\end{equation}

Substituting into
\begin{equation}
\Phi = H_1 - 2\ln P_q + \ln P_{2q},
\end{equation}
one obtains
\begin{align}
\Phi &= \ln N - 2(1-q)\ln N + (1-2q)\ln N \nonumber \\
&= \left[ 1 - 2 + 2q + 1 - 2q \right]\ln N \nonumber \\
&= 0.
\end{align}

Hence the uniform distribution satisfies
\begin{equation}
\boxed{
\Phi_{\mathrm{uniform}} = 0
}.
\end{equation}

\subsubsection*{Dirac distribution}
At the opposite extreme, consider the Dirac distribution $W = (1,0,\ldots,0)$.

Its Shannon entropy is
\begin{equation}
H_1 = -1\ln 1 = 0.
\end{equation}

Similarly,
\begin{equation}
P_q = 1,
\end{equation}
and
\begin{equation}
P_{2q} = 1.
\end{equation}

Therefore
\begin{equation}
\ln P_q = 0,
\end{equation}
and
\begin{equation}
\ln P_{2q} = 0.
\end{equation}

Consequently,
\begin{equation}
\Phi = 0 - 2(0) + 0 = 0.
\end{equation}

Thus,
\begin{equation}
\boxed{
\Phi_{\mathrm{Dirac}} = 0
}.
\end{equation}

The two extremal points of the simplex therefore possess identical complexity values,
\begin{equation}
\Phi_{\mathrm{uniform}} = \Phi_{\mathrm{Dirac}} = 0.
\end{equation}

\subsubsection*{Continuous uniform density}

Consider a probability density uniformly distributed on an interval of length $L$,
\begin{equation}
\rho(x) = \frac{1}{L}, \qquad 0 \le x \le L.
\end{equation}

The differential Shannon entropy is
\begin{equation}
H_1 = -\int_0^L \rho(x)\ln\rho(x)\,dx = -\int_0^L \frac{1}{L} \ln\left(\frac{1}{L}\right)\,dx = \ln L.
\end{equation}

Similarly,
\begin{equation}
P_q = \int_0^L \rho(x)^q\,dx = \int_0^L \left(\frac{1}{L}\right)^q\,dx = L^{1-q},
\end{equation}
and
\begin{equation}
P_{2q} = L^{1-2q}.
\end{equation}

Therefore,
\begin{equation}
\ln P_q = (1-q)\ln L,
\end{equation}
and
\begin{equation}
\ln P_{2q} = (1-2q)\ln L.
\end{equation}

Substituting into
\begin{equation}
\Phi = H_1 - 2\ln P_q + \ln P_{2q},
\end{equation}
gives
\begin{align}
\Phi &= \ln L - 2(1-q)\ln L + (1-2q)\ln L \nonumber \\
&= 0.
\end{align}

Hence,
\begin{equation}
\boxed{
\Phi_{\mathrm{uniform}}^{(c)} = 0
}.
\end{equation}

\subsubsection*{Concentrated density limit}

At the opposite extreme, consider a sequence of probability densities $\rho_\varepsilon(x)$, whose support shrinks to zero width as $\varepsilon\rightarrow0$, so that
\begin{equation}
\rho_\varepsilon(x) \longrightarrow \delta(x-x_0).
\end{equation}

For such concentrated densities one finds $H_1(\rho_\varepsilon) \rightarrow -\infty$, while $P_q(\rho_\varepsilon) \rightarrow \infty$ and $P_{2q}(\rho_\varepsilon) \rightarrow \infty$.

However, the divergences cancel exactly:
\begin{equation}
\Phi = H_1 - 2\ln P_q + \ln P_{2q} \longrightarrow 0.
\end{equation}

Therefore,
\begin{equation}
\boxed{
\Phi_{\mathrm{Dirac}}^{(c)} = 0
}.
\end{equation}

Thus, both extremal configurations in the continuous setting satisfy
\begin{equation}
\Phi_{\mathrm{uniform}}^{(c)} = \Phi_{\mathrm{Dirac}}^{(c)} = 0.
\end{equation}

Consequently, as in the discrete case, neither complete disorder nor complete concentration maximizes complexity. Any nontrivial complexity maximum must therefore arise from intermediate probability structures.
This result has an important interpretation. Complete disorder and complete concentration both correspond to vanishing complexity. In other words, neither perfect randomness nor perfect order represents a maximally complex state.

\section{Exclusion Principle and the Emergence of Two Roots}
\label{exclusion principle and the emergence of two roots}
The preceding sections established that the stationary equation
\begin{equation}
f(x) = \ln x + A_qx^{q-1} - B_qx^{2q-1} + C = 0
\end{equation}
possesses at most two positive roots. In this section we show that the possibilities of zero roots and one root are both impossible. Consequently, the stationary equation must possess exactly two positive solutions in both the discrete and continuous settings.

\subsection{Exclusion of the zero-root case}

Suppose that the stationary equation possesses no positive roots. Then no interior stationary point exists and therefore the maximum of complexity must necessarily occur on the boundary of the probability simplex in the discrete case, or on the corresponding boundary configurations in the continuous case.

However, Section~5.4 established that
\begin{equation}
\Phi_{\mathrm{uniform}} = \Phi_{\mathrm{Dirac}} = 0
\end{equation}
for both discrete and continuous distributions.

Hence one obtains $\Phi_{\max}=0$.

On the other hand, there exist nontrivial distributions for which $\Phi>0$. Therefore, $\Phi_{\max}>0$, which contradicts $\Phi_{\max}=0$.

Hence the stationary equation cannot possess zero positive roots. Therefore,
\begin{equation}
\boxed{
N_r \neq 0
},
\end{equation}
where $N_r$ denotes the number of positive roots of the stationary equation.

\subsection{Exclusion of the one-root case}

Suppose now that the stationary equation possesses exactly one positive root, $x_0$.

\subsubsection*{Discrete case}

Since every probability component satisfies $f(w_i)=0$, all probabilities must coincide,
\begin{equation}
w_1 = w_2 = \cdots = w_N = x_0.
\end{equation}

Normalization implies
\begin{equation}
\sum_{i=1}^{N}w_i = 1,
\end{equation}
and therefore $Nx_0=1$, so that $x_0 = \frac{1}{N}$. Hence the one-root case corresponds uniquely to the uniform distribution.

\subsubsection*{Continuous case}

Similarly, if the stationary equation possesses a unique root $x_0$, then every point of the support must satisfy $\rho(x)=x_0$.

Thus the density is constant over its support and therefore corresponds to a uniform density,
\begin{equation}
\rho(x) = \frac{1}{L}, \qquad 0 \le x \le L.
\end{equation}

Consequently, the one-root case in the continuous setting also reduces to the uniform distribution.

Since Section~5.4 established that $\Phi_{\mathrm{uniform}}=0$, one again obtains $\Phi_{\max}=0$, which contradicts the existence of distributions satisfying $\Phi>0$.

Therefore,
\begin{equation}
\boxed{
N_r \neq 1
}.
\end{equation}

\subsection{The two-root theorem}

To establish an upper bound on the number of stationary solutions, we invoke Rolle's theorem. Since the function $f(x)$ is continuously differentiable on $(0,\infty)$, suppose, for contradiction, that it possesses three distinct positive roots,
\begin{equation}
0<x_1<x_2<x_3,
\end{equation}
satisfying
\begin{equation}
f(x_1)=f(x_2)=f(x_3)=0.
\end{equation}

Applying Rolle's theorem on the interval $[x_1,x_2]$, there exists a point
\begin{equation}
c_1\in(x_1,x_2)
\end{equation}
such that
\begin{equation}
f'(c_1)=0.
\end{equation}

Similarly, applying Rolle's theorem on the interval $[x_2,x_3]$, there exists another point
\begin{equation}
c_2\in(x_2,x_3)
\end{equation}
such that
\begin{equation}
f'(c_2)=0.
\end{equation}

Since $c_1<c_2$, these correspond to two distinct critical points of $f(x)$. However, the shape analysis presented in Section~5 established that $f(x)$ possesses exactly one critical point. This contradiction proves that three distinct positive roots cannot exist.

Therefore, the stationary equation admits at most two positive solutions,
\begin{equation}
N_r\le2.
\end{equation}. The previous arguments imply $N_r \neq 0$ and $N_r \neq 1$. Therefore, the only remaining possibility is
\begin{equation}
\boxed{
N_r = 2
}.
\end{equation}
\subsection{Distinctness of the stationary solutions}

The previous section established that the stationary equation admits exactly two positive roots. We now show that these two stationary solutions must necessarily be distinct.

Suppose, for contradiction, that the two stationary solutions coincide,
\begin{equation}
w_c=w_h=x_0,
\end{equation}
in the discrete case, or equivalently
\begin{equation}
\rho_c=\rho_h=\rho_0,
\end{equation}
in the continuous case.

Since every stationary probability value must satisfy the transcendental equation,
\begin{equation}
f(x)=0,
\end{equation}
all probability components become identical. Consequently, the discrete stationary distribution reduces to
\begin{equation}
w_1=w_2=\cdots=w_N=x_0,
\end{equation}
while the continuous stationary density becomes
\begin{equation}
\rho(x)=\rho_0.
\end{equation}

Normalization immediately yields
\begin{equation}
x_0=\frac{1}{N},
\end{equation}
for the discrete case, and
\begin{equation}
\rho_0=\frac{1}{L},
\end{equation}
for the continuous case. Thus, coincident stationary solutions necessarily correspond to the uniform distribution.

However, Section~5.4 established that the generalized statistical complexity vanishes for the uniform state,
\begin{equation}
\Phi_{\mathrm{uniform}}=0,
\end{equation}
whereas non-uniform distributions satisfying
\begin{equation}
\Phi>0
\end{equation}
exist throughout the admissible probability simplex. Therefore the coincidence of the two stationary solutions contradicts the existence of positive-complexity stationary states.

Hence the two stationary solutions must be distinct. Because the unique global maximum of the shape function is strictly positive, the two roots are strictly separated by the peak (\(w_h < x_{\text{peak}} < w_c\)). Assigning the smaller root to the halo branch and the larger root to the core branch by convention, we therefore conclude that
\begin{equation}
\boxed{w_h<w_c}
\end{equation}
in the discrete setting, and analogously,
\begin{equation}
\boxed{\rho_h<\rho_c}
\end{equation}
in the continuous setting.
Hence the stationary equation possesses exactly two positive solutions, $w_h < w_c$ in the discrete case, and $\rho_h < \rho_c$ in the continuous case.

Consequently, every component of a complexity-maximizing distribution must assume one of these two values.

The discrete complexity maximizer therefore possesses the form
\begin{equation}
W = \left( \underbrace{w_c,\ldots,w_c}_{m}, \underbrace{w_h,\ldots,w_h}_{N-m} \right),
\end{equation}
while the continuous maximizer necessarily decomposes into two regions characterized by the densities $\rho_c$ and $\rho_h$.

Thus, in both settings, the variational problem reduces to the determination of the multiplicities or measures associated with the two stationary branches. The analysis of these quantities forms the subject of the next section.

\subsection{Reduction of the variational problem to multiplicity analysis}

The previous theorem establishes that the stationary equation possesses exactly two positive solutions. In the discrete case these solutions are denoted by $w_h<w_c$, whereas in the continuous case the corresponding stationary densities are $\rho_h<\rho_c$.

Consequently, every component of a complexity-maximizing probability distribution must assume one of these two values. Similarly, in the continuous setting, every point of the support must belong to one of two regions characterized by the densities $\rho_h$ and $\rho_c$.

Therefore, the variational problem no longer consists in determining arbitrary probability values. Instead, it reduces to determining the multiplicities associated with each stationary branch.

\subsubsection*{Discrete case}

The complexity maximizer necessarily possesses the structure
\begin{equation}
W = \left( \underbrace{w_c,\ldots,w_c}_{m}, \underbrace{w_h,\ldots,w_h}_{N-m} \right),
\end{equation}
for some integer $m\in\{1,2,\ldots,N-1\}$.

The normalization condition becomes
\begin{equation}
mw_c + (N-m)w_h = 1,
\end{equation}
which couples the two stationary values and their corresponding multiplicities.

\subsubsection*{Continuous case}

In the continuous setting, the support of the density naturally decomposes into two regions, $\Omega_c$ and $\Omega_h$, having measures $\mu(\Omega_c)=M$ and $\mu(\Omega_h)=L-M$, where $L$ denotes the total measure of the support.

The maximizing density therefore possesses the form
\begin{equation}
\rho(x) =
\begin{cases}
\rho_c, & x\in\Omega_c, \\
\rho_h, & x\in\Omega_h,
\end{cases}
\end{equation}
with $\rho_h<\rho_c$.

Normalization implies
\begin{equation}
M\rho_c + (L-M)\rho_h = 1.
\end{equation}

Thus, in complete analogy with the discrete case, the continuous variational problem is reduced to determining the measures associated with the two stationary branches.

Hence, once the existence of two stationary values has been established, the original optimization problem is transformed into a multiplicity problem. In the discrete case this multiplicity is represented by the integer parameter $m$, whereas in the continuous case it is represented by the measure parameter $M$.

The determination of these quantities, together with the corresponding amplitudes $(w_c,w_h)$ and $(\rho_c,\rho_h)$, constitutes the next stage of the analysis.

This naturally leads to the notion of an $m$-core distribution and its continuous analogue, whose properties will be investigated in the following section.

\section{Multiplicity Optimization and the $m$-Core Structure}
\label{multiplicity optimization and the $m$-core structure}
The previous sections established that every stationary solution of the generalized complexity functional consists of exactly two probability levels. Consequently, the remaining degree of freedom is no longer the amplitudes themselves, but the multiplicities with which the two stationary values occur.

The purpose of this section is to determine how the generalized complexity depends upon these multiplicities. We first formulate the optimization problem for the restricted two-level family, after which the stationarity equations obtained in the previous sections are used to determine the optimal amplitudes. Finally, the envelope theorem allows the dependence upon the multiplicity parameter to be analyzed explicitly.

\subsection{Reduction of the Variational Problem}

\subsubsection*{Discrete formulation}

Begin with the original generalized complexity

\begin{equation}
\Phi(W) = H_1(W) + 2\ln P_q(W) - \ln P_{2q}(W),
\end{equation}

where

\begin{align}
H_1(W) &= -\sum_{i=1}^{N}w_i\ln w_i, \\
P_q(W) &= \sum_{i=1}^{N}w_i^q, \\
P_{2q}(W) &= \sum_{i=1}^{N}w_i^{2q}.
\end{align}

By the Two-Level Maximizer Theorem established in the previous section, every stationary configuration has the form

\begin{equation}
W_m = \left(\underbrace{w_c,\ldots,w_c}_{m}, \underbrace{w_h,\ldots,w_h}_{N-m}\right),
\end{equation}

where

\begin{equation}
m\in\{1,2,\ldots,N-1\}.
\end{equation}

The normalization constraint is

\begin{equation}
mw_c+(N-m)w_h=1.
\end{equation}

Substituting the two-level structure into the generalized complexity gives

\begin{equation}
\Phi(w_c,w_h,m) = -\Bigl[ mw_c\ln w_c + (N-m)w_h\ln w_h \Bigr] + 2\ln\!\Bigl( mw_c^q+(N-m)w_h^q \Bigr) - \ln\!\Bigl( mw_c^{2q} + (N-m)w_h^{2q} \Bigr).
\end{equation}

Since the amplitudes remain constrained by normalization, we introduce the Lagrangian

\begin{equation}
\mathcal L = \Phi(w_c,w_h,m) - \lambda \left( mw_c+(N-m)w_h-1 \right).
\end{equation}

Using the normalization constraint,

\begin{equation}
w_h = \frac{1-mw_c}{N-m},
\end{equation}

eliminates both \(w_h\) and the constraint simultaneously. After substitution,

\begin{equation}
\boxed{ \mathcal L = \Phi = \Phi(w_c,m), }
\end{equation}

so that the optimization becomes an unconstrained optimization in the single variable \(w_c\) for each fixed multiplicity \(m\).

\subsubsection*{Continuous formulation}

Exactly the same reduction applies to the continuous problem. The generalized complexity is

\begin{equation}
\Phi(\rho) = H_1(\rho) + 2\ln P_q(\rho) - \ln P_{2q}(\rho),
\end{equation}

where

\begin{align}
H_1(\rho) &= -\int_\Omega \rho(x)\ln\rho(x)\,dx, \\
P_q(\rho) &= \int_\Omega \rho(x)^q\,dx, \\
P_{2q}(\rho) &= \int_\Omega \rho(x)^{2q}\,dx.
\end{align}

The Two-Level Maximizer Theorem implies that every stationary density has the form

\begin{equation}
\rho(x) =
\begin{cases}
\rho_c, & \text{if } x\in\Omega_c, \\
\rho_h, & \text{if } x\in\Omega_h,
\end{cases}
\end{equation}

with

\begin{equation}
\mu(\Omega_c)=M, \qquad \mu(\Omega_h)=L-M.
\end{equation}

Normalization gives

\begin{equation}
M\rho_c+(L-M)\rho_h=1.
\end{equation}

Substituting into the generalized complexity,

\begin{equation}
\Phi(\rho_c,\rho_h,M) = - \left[ M\rho_c\ln\rho_c + (L-M)\rho_h\ln\rho_h \right] + 2\ln \left( M\rho_c^q+(L-M)\rho_h^q \right) - \ln \left( M\rho_c^{2q} + (L-M)\rho_h^{2q} \right).
\end{equation}

Introducing the constrained functional,

\begin{equation}
\mathcal L = \Phi - \lambda \left( M\rho_c+(L-M)\rho_h-1 \right),
\end{equation}

and substituting

\begin{equation}
\rho_h = \frac{1-M\rho_c}{L-M},
\end{equation}

again removes the constraint completely, yielding

\begin{equation}
\boxed{ \mathcal L = \Phi = \Phi(\rho_c,M), }
\end{equation}

so that the continuous optimization likewise reduces to a single-variable optimization.

\subsection{Reduction to a Single-Parameter Optimization}

After eliminating the normalization constraint through

\begin{equation}
w_h=\frac{1-mw_c}{N-m},
\end{equation}

the generalized complexity becomes an unconstrained function of only two variables,

\begin{equation}
\Phi=\Phi(w_c,m).
\end{equation}

The optimization problem therefore reduces to the ordinary unconstrained optimization of \(\Phi(w_c,m)\) with respect to the remaining free variable \(w_c\), while the multiplicity \(m\) is regarded as fixed.

Accordingly, the stationary amplitudes satisfy the first-order condition

\begin{equation}
\boxed{
\frac{\partial\Phi(w_c,m)}{\partial w_c}=0.
}
\end{equation}

Carrying out the differentiation yields

\begin{equation}
\boxed{
\ln\!\left(\frac{w_c}{w_h}\right)
=
-\frac{2q\left(w_c^{q-1}-w_h^{q-1}\right)}{m w_c^q+(N-m)w_h^q}
+
\frac{2q\left(w_c^{2q-1}-w_h^{2q-1}\right)}{m w_c^{2q}+(N-m)w_h^{2q}},
}
\end{equation}

where throughout

\begin{equation}
w_h=\frac{1-mw_c}{N-m}.
\end{equation}

This is precisely the stationary equation derived in the previous sections. By the Two-Level Maximizer Theorem, it possesses exactly two positive solutions corresponding to the halo and the core amplitudes.

Hence, for every admissible multiplicity \(m\), the stationary solution is uniquely determined,

\begin{equation}
w_c=w_c^{*}(m),
\qquad
w_h=w_h^{*}(m).
\end{equation}

Substituting the stationary amplitude into the reduced complexity produces the optimized complexity

\begin{equation}
\boxed{
\Phi^{*}(m)
=
\Phi\!\left(w_c^{*}(m),m\right).
}
\end{equation}

Thus, after solving the first-order condition, the original two-variable optimization problem collapses to the study of the single-variable function \(\Phi^{*}(m)\).

\subsection{Application of the Envelope Theorem}

The preceding subsection reduced the original variational problem to the single-parameter optimization

\begin{equation}
\Phi^{*}(m)
=
\Phi\!\left(w_c^{*}(m),m\right),
\end{equation}

where the stationary amplitude \(w_c^{*}(m)\) is determined implicitly by the first-order condition

\begin{equation}
\frac{\partial\Phi}{\partial w_c}=0.
\end{equation}

At first sight, differentiating \(\Phi^{*}(m)\) with respect to the multiplicity parameter appears to require the derivative \(\frac{dw_c^{*}}{dm}\) through the chain rule. Indeed,

\begin{equation}
\frac{d\Phi^{*}}{dm}
=
\frac{\partial\Phi}{\partial m}
+
\frac{\partial\Phi}{\partial w_c}
\frac{dw_c^{*}}{dm}.
\end{equation}

However, the stationary solution satisfies

\begin{equation}
\left.
\frac{\partial\Phi}{\partial w_c}
\right|_{w_c=w_c^{*}(m)}
=0,
\end{equation}

and therefore the second term vanishes identically.

Consequently,

\begin{equation}
\boxed{
\frac{d\Phi^{*}}{dm}
=
\left.
\frac{\partial\Phi}{\partial m}
\right|_{w_c=w_c^{*}(m)}.
}
\end{equation}

This is precisely the statement of the Envelope Theorem\cite{milgrom2002envelope, samuelson1947foundations}.

The theorem shows that the variation of the optimized complexity depends only upon the explicit dependence of the reduced functional on the multiplicity parameter. The implicit variation of the stationary amplitude does not contribute.

Therefore, the determination of the optimal multiplicity requires only the explicit partial derivative of the reduced complexity with respect to \(m\).

Exactly the same argument applies to the continuous formulation. Writing

\begin{equation}
\Phi^{*}(M)
=
\Phi\!\left(\rho_c^{*}(M),M\right),
\end{equation}

the envelope theorem gives

\begin{equation}
\boxed{
\frac{d\Phi^{*}}{dM}
=
\left.
\frac{\partial\Phi}{\partial M}
\right|_{\rho_c=\rho_c^{*}(M)}.
}
\end{equation}

Hence, both the discrete and continuous optimization problems reduce to the evaluation of a single explicit derivative with respect to the multiplicity parameter.

\subsection{Envelope derivative of the optimized complexity}

After solving the stationary condition
\begin{equation}
\frac{\partial \Phi}{\partial w_c}=0,
\end{equation}
the generalized complexity becomes the one-parameter family
\begin{equation}
\Phi^{*}(m)=\Phi\!\left(w_c^{*}(m),m\right).
\end{equation}

Direct differentiation of this optimized function would ordinarily require the implicit derivative
\(
dw_c^{*}/dm
\).
However, since the stationary condition has already been imposed, the envelope theorem applies, yielding

\begin{equation}
\boxed{
\frac{d\Phi^{*}}{dm}
=
\frac{\partial \Phi}{\partial m}
}
\end{equation}

where the partial derivative is evaluated while holding \(w_c\) fixed\cite{afriat1971theory}.

Differentiating the unconstrained complexity functional with respect to the multiplicity parameter therefore gives

\begin{align}
\frac{d\Phi^{*}}{dm}
&=
(w_c-w_h)
-
w_c
\ln\!\left(\frac{w_c}{w_h}\right)
\nonumber\\
&\quad
+
\frac{
2\left[
q\,w_h^{q-1}(w_c-w_h)
-
(w_c^q-w_h^q)
\right]
}{P_q}
\nonumber\\
&\quad
+
\frac{
(w_c^{2q}-w_h^{2q})
-
2q\,w_h^{2q-1}(w_c-w_h)
}{P_{2q}},
\label{eq:raw_envelope}
\end{align}

where

\begin{equation}
P_q
=
mw_c^q+(N-m)w_h^q,
\qquad
P_{2q}
=
mw_c^{2q}+(N-m)w_h^{2q}.
\end{equation}

Equation~\eqref{eq:raw_envelope} is exact but still contains the transcendental logarithmic term. This logarithm is not independent, since the stationary condition obtained in the previous subsection satisfies

\begin{equation}
\ln\!\left(\frac{w_c}{w_h}\right)
=
-\frac{2q\left(w_c^{q-1}-w_h^{q-1}\right)}{P_q}
+
\frac{2q\left(w_c^{2q-1}-w_h^{2q-1}\right)}{P_{2q}}.
\label{eq:FOC_log}
\end{equation}

Substituting Eq.~\eqref{eq:FOC_log} into Eq.~\eqref{eq:raw_envelope}, all logarithmic terms cancel identically. After straightforward algebraic simplification, the derivative reduces to the remarkably compact form

\begin{equation}
\boxed{
\frac{d\Phi^{*}}{dm}
=
(w_c-w_h)
+
\frac{2(q-1)\left(w_c^q-w_h^q\right)}{P_q}
-
\frac{(2q-1)\left(w_c^{2q}-w_h^{2q}\right)}{P_{2q}}.
}
\label{eq:envelope_logfree}
\end{equation}

This identity is valid at every stationary point of the optimization problem and contains no transcendental terms. Consequently, the sign of the optimized complexity derivative is determined entirely by rational power functions of the stationary amplitudes.

In the following subsection, we introduce the ratio
\begin{equation}
\tau=\frac{w_c}{w_h}>1,
\end{equation}
which transforms Eq.~\eqref{eq:envelope_logfree} into a universal polynomial form whose monotonicity can be established analytically.

\subsubsection*{Continuous case}

The continuous optimization problem proceeds identically. After eliminating the normalization constraint

\[
M\rho_c+(L-M)\rho_h=1,
\]

the generalized complexity becomes the unconstrained function

\[
\Phi=\Phi(\rho_c,M),
\]

where

\[
P_q
=
M\rho_c^q+(L-M)\rho_h^q,
\qquad
P_{2q}
=
M\rho_c^{2q}+(L-M)\rho_h^{2q}.
\]

The stationary condition

\[
\frac{\partial\Phi}{\partial\rho_c}=0
\]

determines the optimal branch
\(
\rho_c=\rho_c^*(M)
\),
and therefore

\[
\Phi^*(M)
=
\Phi\!\left(\rho_c^*(M),M\right).
\]

Applying the envelope theorem yields

\[
\boxed{
\frac{d\Phi^*}{dM}
=
\frac{\partial\Phi}{\partial M},
}
\]

where the partial derivative is evaluated while holding \(\rho_c\) fixed.

Differentiating the unconstrained functional gives

\begin{align}
\frac{d\Phi^*}{dM}
&=
(\rho_c-\rho_h)
-
\rho_c
\ln\!\left(\frac{\rho_c}{\rho_h}\right)
\nonumber\\
&\quad
+
\frac{
2\left[
q\rho_h^{\,q-1}(\rho_c-\rho_h)
-
(\rho_c^q-\rho_h^q)
\right]
}{P_q}
\nonumber\\
&\quad
+
\frac{
(\rho_c^{2q}-\rho_h^{2q})
-
2q\rho_h^{\,2q-1}(\rho_c-\rho_h)
}{P_{2q}}.
\label{eq:raw_envelope_cont}
\end{align}

The continuous stationary equation satisfies

\[
\ln\!\left(\frac{\rho_c}{\rho_h}\right)
=
-
\frac{2q(\rho_c^{q-1}-\rho_h^{q-1})}{P_q}
+
\frac{2q(\rho_c^{2q-1}-\rho_h^{2q-1})}{P_{2q}},
\]

which is identical in structure to the discrete case. Substituting this relation into Eq.~\eqref{eq:raw_envelope_cont} eliminates the logarithmic term completely and yields

\[
\boxed{
\frac{d\Phi^*}{dM}
=
(\rho_c-\rho_h)
+
\frac{2(q-1)(\rho_c^q-\rho_h^q)}{P_q}
-
\frac{(2q-1)(\rho_c^{2q}-\rho_h^{2q})}{P_{2q}}.
}
\label{eq:logfree_cont}
\]

Thus, both the discrete and continuous optimization problems reduce to the same algebraic structure, differing only by the replacement

\[
(m,N,w_c,w_h)
\longrightarrow
(M,L,\rho_c,\rho_h).
\]

Consequently, the monotonicity analysis developed in the following subsection applies simultaneously to both settings.

\subsection{Reduction to a ratio formulation}

The envelope derivatives obtained in the previous subsection still involve the two stationary amplitudes explicitly. These expressions can be reduced further by introducing the ratio between the two branches.

Since the core branch always satisfies
\begin{equation}
w_c>w_h,
\end{equation}
we define
\begin{equation}
\tau=\frac{w_c}{w_h}>1.
\end{equation}

Using the normalization condition
\begin{equation}
mw_c+(N-m)w_h=1,
\end{equation}
the two amplitudes become
\begin{equation}
w_h=\frac{1}{m\tau+(N-m)},
\qquad
w_c=\frac{\tau}{m\tau+(N-m)}.
\end{equation}

Similarly, the escort moments become
\begin{align}
P_q
&=
mw_c^q+(N-m)w_h^q
=
\frac{m\tau^q+(N-m)}
{\left[m\tau+(N-m)\right]^q},
\\
P_{2q}
&=
mw_c^{2q}+(N-m)w_h^{2q}
=
\frac{m\tau^{2q}+(N-m)}
{\left[m\tau+(N-m)\right]^{2q}}.
\end{align}

For convenience, we introduce the family of rational kernels
\begin{equation}
R_k(\tau)
=
\frac{\tau^k-1}
{m\tau^k+(N-m)},
\qquad
k>0.
\label{eq:Rk_definition}
\end{equation}

Notice that
\begin{align}
R_1(\tau)
&=
\frac{w_c-w_h}{P_1},
\\
R_q(\tau)
&=
\frac{w_c^q-w_h^q}{P_q},
\\
R_{2q}(\tau)
&=
\frac{w_c^{2q}-w_h^{2q}}{P_{2q}}.
\end{align}

Substituting these identities into the log-free envelope derivative obtained previously,
\begin{equation}
\frac{d\Phi^*}{dm}
=
{w_c-w_h}
+
\frac{2(q-1)(w_c^q-w_h^q)}{P_q}
-
\frac{(2q-1)(w_c^{2q}-w_h^{2q})}{P_{2q}},
\end{equation}
and substituting the rational kernels, we obtain
\begin{equation}
\boxed{
\frac{d\Phi^*}{dm}
=
\left[
R_1(\tau)
+
2(q-1)R_q(\tau)
-
(2q-1)R_{2q}(\tau)
\right].
}
\label{eq:discrete_R_form}
\end{equation}

The continuous problem admits an identical reduction.

Defining
\begin{equation}
\tau=\frac{\rho_c}{\rho_h}>1,
\end{equation}
the normalization condition
\begin{equation}
M\rho_c+(L-M)\rho_h=1
\end{equation}
gives
\begin{equation}
\rho_h=\frac{1}{M\tau+(L-M)},
\qquad
\rho_c=\frac{\rho}{M\tau+(L-M)}.
\end{equation}

Introducing
\begin{equation}
R_k(\tau)
=
\frac{\tau^k-1}
{M\tau^k+(L-M)},
\end{equation}
the optimized continuous derivative becomes
\begin{equation}
\boxed{
\frac{d\Phi^*}{dM}
=
\left[
R_1(\tau)
+
2(q-1)R_q(\tau)
-
(2q-1)R_{2q}(\tau)
\right].
}
\label{eq:continuous_R_form}
\end{equation}

Therefore, both the discrete and continuous optimization problems reduce to the same universal algebraic structure. The proof of monotonicity is consequently reduced to establishing the sign of the common expression
\begin{equation}
R_1
+
2(q-1)R_q
-
(2q-1)R_{2q},
\end{equation}
which is carried out in the following subsection.

\subsection{Monotonicity of the universal kernel}

The optimized complexity derivative obtained in the previous subsection is governed entirely by the universal combination
\begin{equation}
\mathcal{R}(\tau)
=
R_1(\tau)
+
2(q-1)R_q(\tau)
-
(2q-1)R_{2q}(\tau),
\end{equation}
where
\begin{equation}
R_k(\tau)
=
\frac{\tau^k-1}
{m\tau^k+(N-m)},
\qquad
\tau>1.
\end{equation}

The sign of the envelope derivative therefore reduces to determining the sign of
\(
\mathcal{R}(\tau).
\)

To establish this, we first investigate the dependence of the kernel upon its exponent.

Differentiating \(R_k(\tau)\) with respect to the continuous parameter \(k\) gives
\begin{equation}
\frac{\partial R_k}{\partial k}
=
\frac{
N\tau^k\ln(\tau)
}{
\left(m\tau^k+N-m\right)^2
}.
\end{equation}

Since
\begin{equation}
N>m>0,
\qquad
\tau>1,
\qquad
\ln(\tau)>0,
\end{equation}
every factor on the right-hand side is strictly positive. Hence
\begin{equation}
\boxed{
\frac{\partial R_k}{\partial k}>0.
}
\end{equation}

Therefore the family \(R_k(\tau)\) is strictly increasing as a function of the exponent \(k\). In particular,
\begin{equation}
R_1(\tau)
<
R_q(\tau)
<
R_{2q}(\tau),
\qquad
(q>1).
\end{equation}

Next observe that
\begin{equation}
1
+
2(q-1)
=
2q-1.
\end{equation}

Hence the coefficients appearing in \(\mathcal{R}\) form a convex decomposition,
\begin{equation}
\frac{1}{2q-1}
+
\frac{2(q-1)}{2q-1}
=
1.
\end{equation}

Since both \(R_1\) and \(R_q\) are strictly smaller than \(R_{2q}\), every convex combination of \(R_1\) and \(R_q\) is likewise strictly smaller than \(R_{2q}\). Thus
\begin{equation}
\frac{1}{2q-1}R_1
+
\frac{2(q-1)}{2q-1}R_q
<
R_{2q}.
\end{equation}

Multiplying by the positive quantity \(2q-1\) immediately yields
\begin{equation}
\boxed{
R_1
+
2(q-1)R_q
-
(2q-1)R_{2q}
<
0.
}
\end{equation}

Consequently,
\begin{equation}
\boxed{
\frac{d\Phi^{*}}{dm}<0,
\qquad
\frac{d\Phi^{*}}{dM}<0.
}
\end{equation}

The optimized generalized complexity therefore decreases monotonically with increasing core multiplicity. The unique maximum is attained at the smallest admissible core measure\cite{marshall2011inequalities},
\begin{equation}
\boxed{
m=1
}
\end{equation}
for discrete systems, and
\begin{equation}
\boxed{
M\rightarrow0^{+}
}
\end{equation}
for continuous systems.

Thus, in both discrete and continuous settings, the complexity-maximizing configuration consists of a minimal high-density core embedded within an extended low-density halo\cite{bhatia1997matrix}.

\section{The Core-Halo Theorem}
\label{the core-halo theorem}
The results of the previous sections allow us to characterize the universal structure of the probability distributions maximizing the generalized statistical complexity. The following theorem summarizes the main result of the present work.

\begin{theorem}[Core--Halo Theorem for Discrete Distributions]
Let
\begin{equation}
\Phi(W) = e^{H_1(W)+(2q-2)H_q(W)-(2q-1)H_{2q}(W)}
\end{equation}
be the generalized statistical complexity defined over the probability simplex
\begin{equation}
\Delta_{N-1} = \left\{ W=(w_1,\ldots,w_N) : w_i\ge0, \quad \sum_{i=1}^{N}w_i=1 \right\},
\end{equation}
with $q>1$.

Then every global maximizer of $\Phi$ possesses exactly two distinct probability values, $w_h<w_c$, and is necessarily of the form
\begin{equation}
W^* = \left( w_c, \underbrace{w_h,\ldots,w_h}_{N-1} \right),
\end{equation}
up to permutation of the coordinates.

Equivalently, one probability component forms a distinguished core while the remaining $N-1$ components constitute a uniform halo.
\end{theorem}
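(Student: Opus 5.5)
The plan is to combine the results of the previous sections in four steps: existence, interiority, the two-level reduction, and the choice of multiplicity. Since $\exp$ is strictly increasing, maximizing the stated $\Phi$ is the same as maximizing the complexity functional of Eq.~(\ref{eq3.4.1}), so I will work with the latter throughout.

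\textbf{Existence and interiority.} The exponent $H_1+2\ln P_q-\ln P_{2q}$ is continuous on the compact simplex $\Delta_{N-1}$ for $q>1$, because $P_q\ge N^{1-q}>0$ keeps the logarithms finite. Hence a global maximizer $W^*$ exists. By Section~5.4 and the existence of distributions with $\Phi>0$, this maximizer is neither the uniform nor the Dirac configuration.

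Next I would show that $W^*$ lies in the open simplex. Suppose $w_j=0$ for some $j$. Transfer a small mass $\varepsilon$ from a positive coordinate to $w_j$. The Shannon term then changes by $\varepsilon\ln(1/\varepsilon)+O(\varepsilon)$. For $q>1$, the terms $2\ln P_q$ and $-\ln P_{2q}$ change only by $O(\varepsilon)$, since the new contributions $\varepsilon^q$ and $\varepsilon^{2q}$ are $o(\varepsilon)$. So $\Phi$ strictly increases, which contradicts maximality. This is the step that justifies using the Lagrange condition rather than a boundary argument.

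\textbf{Two-level structure.} At an interior maximizer the Lagrange stationarity of Section~4.1 holds. Every coordinate is therefore a positive root of the unified equation (\ref{eq:unified_stationary}). By the two-root theorem of Section~6.3 and the distinctness result of Section~6.4, each coordinate takes one of two values $w_h<w_c$. Both values must actually occur, because otherwise the distribution is uniform and has $\Phi=0$. Thus, up to permutation, $W^*=W_m$ for some $m\in\{1,\dots,N-1\}$. This reduces the statement to showing $m=1$.

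\textbf{Choice of multiplicity.} I would regard $\Phi(w_c,m)$ from Section~7.1 as a smooth function of real $m\in[1,N-1]$. By the envelope identity (\ref{eq:envelope_logfree}) and the kernel inequality of Section~7.6, $d\Phi^*/dm=\mathcal R(\tau)<0$ wherever $\tau>1$. Integrating this from $1$ to $m$ then gives $\Phi^*(1)>\Phi^*(m)$ for every integer $m\ge2$. Hence no maximizer has $m\ge 2$, and every global maximizer has the stated form.

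\textbf{Main obstacle.} I expect the hardest step to be making the envelope argument rigorous. It needs the optimal amplitude $w_c^*(m)$ to be a well-defined, differentiable branch on the whole interval $[1,N-1]$, with $\tau>1$ throughout. The first-order condition may have several solutions, and it is not obvious that the relevant one varies continuously.

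I would first handle this by replacing $\Phi^*(m)$ with the value function $V(m)=\max_{w_c}\Phi(w_c,m)$ over the admissible interval $w_h<w_c<1/m$. Danskin's theorem then gives one-sided derivatives $\partial_m\Phi$ evaluated at the maximizing $w_c$. At such a maximizer the first-order condition holds, so the log-free formula and the inequality $\mathcal R<0$ still apply and make $V$ strictly decreasing. If differentiability fails, the fallback is a direct comparison: take the maximizing configuration for $m$ and construct an explicit $1$-core with strictly larger $\Phi$, using the same kernel monotonicity in $k$.
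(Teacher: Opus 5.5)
Your proposal is correct and follows the paper's route. The at-most-two-roots property of the unimodal stationary function gives the $m$-core form, and the envelope identity together with the kernel inequality $R_1+2(q-1)R_q<(2q-1)R_{2q}$ then forces $m=1$. Your existence and interiority step, and your value-function (Danskin) treatment of $w_c^*(m)$, supply rigor the paper skips: the paper assumes a unique differentiable branch and never properly justifies the Lagrange condition at the maximizer. One harmless slip: the exponent is $H_1-2\ln P_q+\ln P_{2q}$, not the sign-flipped form copied from Section~7.
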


\begin{proof}
Section~6 established that the stationary equation possesses exactly two positive roots, $w_h<w_c$. Consequently, every stationary distribution must have the structure
\begin{equation}
W = ( \underbrace{w_c,\ldots,w_c}_{m}, \underbrace{w_h,\ldots,w_h}_{N-m} ),
\end{equation}
for some integer $m\in\{1,\ldots,N-1\}$.

Section~7 proved that the optimized complexity decreases monotonically with increasing multiplicity:
\begin{equation}
\Phi(1) > \Phi(2) > \cdots > \Phi(N-1).
\end{equation}

Hence the maximum is attained uniquely at $m=1$. Therefore,
\begin{equation}
W^* = (w_c, w_h,\ldots,w_h),
\end{equation}
up to permutation of the coordinates. This proves the theorem.
\end{proof}

\begin{theorem}[Core--Halo Theorem for Continuous Distributions]
Let
\begin{equation}
\Phi[\rho] = e^{H_1[\rho]+(2q-2)H_q[\rho]-(2q-1)H_{2q}[\rho]}
\end{equation}
be defined over normalized probability densities satisfying
\begin{equation}
\rho(x)\ge0, \qquad \int_\Omega \rho(x)\,dx=1.
\end{equation}

Then every global maximizer possesses exactly two density levels, $\rho_h<\rho_c$, and has the form
\begin{equation}
\rho^*(x) =
\begin{cases}
\rho_c, & x\in\Omega_c, \\[1ex]
\rho_h, & x\in\Omega_h,
\end{cases}
\end{equation}
where $\Omega=\Omega_c\cup\Omega_h$, $\Omega_c\cap\Omega_h=\emptyset$, and the measure of the core region is minimal.

Hence the maximizing density consists of a single dominant core embedded within a surrounding halo.
\end{theorem}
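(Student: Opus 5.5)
The proof will mirror the discrete Core--Halo Theorem and use the results of Sections~5--7 with the replacement $(m,N,w_c,w_h)\to(M,L,\rho_c,\rho_h)$. The plan has four steps.

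\textbf{Step 1: reduce to a critical point.} Maximizing $\Phi[\rho]$ is equivalent to maximizing $\ln\Phi[\rho]=H_1+(2q-2)H_q-(2q-1)H_{2q}$, since the exponential is strictly increasing. A global maximizer $\rho^*$ with $\Phi>0$ must be a constrained critical point. The boundary configurations (uniform density and the concentrated limit) give $\ln\Phi=0$, as computed in Section~5.4. So $\rho^*$ satisfies the continuous stationary equation~(\ref{eq:stationary_continuous}) pointwise on its support, with coefficients $A_q^{(\mathrm{cont})}$ and $B_q^{(\mathrm{cont})}$ fixed by $\rho^*$ itself.

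\textbf{Step 2: two density levels.} By the shape analysis of $f$ and $g$ in Sections~5.2--5.3, for $q>1$ the function $f$ has a unique interior maximum. By the two-root theorem of Section~6, $f$ therefore has exactly two positive roots $\rho_h<\rho_c$. Hence $\rho^*$ takes only these two values on its support. Setting $\Omega_c=\{\rho^*=\rho_c\}$ and $\Omega_h=\{\rho^*=\rho_h\}$ gives the disjoint decomposition. Both sets have positive measure; otherwise $\rho^*$ is uniform and $\Phi$ reduces to its boundary value, as in the exclusion arguments of Section~6.

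\textbf{Step 3: minimal core.} Write $M=\mu(\Omega_c)$. With the amplitudes optimized for each $M$, the envelope argument of Section~7 and the ratio reduction $\tau=\rho_c/\rho_h>1$ give
\[
\frac{d\Phi^*}{dM}=R_1(\tau)+2(q-1)R_q(\tau)-(2q-1)R_{2q}(\tau).
\]
The kernel $R_k(\tau)$ is strictly increasing in $k$, and $1+2(q-1)=2q-1$, so this expression is negative. Thus $\Phi^*(M)$ is strictly decreasing in $M$, and the maximizer has the smallest admissible core measure. This yields the stated form of $\rho^*$.

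\textbf{Main obstacle.} The hard part is making Step 1 rigorous. Two issues need care:
\begin{itemize}
\item the existence of a global maximizer in $\mathcal{P}(\Omega)$, since $\Phi$ is not weakly upper semicontinuous and the supremum may only be approached as $M\to0^+$;
\item the pointwise validity of the Euler--Lagrange equation on the support.
\end{itemize}
My first approach is to work on a support of fixed finite measure $L$ and use variations $\rho^*+\epsilon\eta$ with $\int\eta=0$, supported where $\rho^*$ is bounded away from $0$ and $\infty$. This gives the stationary equation almost everywhere on the support.

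If no minimizing core exists, I would interpret ``minimal'' as the statement that the supremum is attained along a sequence with $M\to0^+$. Monotonicity of $\Phi^*(M)$ identifies this limit, consistent with the paper's statement $M\rightarrow0^{+}$ at the end of Section~7. The remaining details follow the discrete proof word for word.
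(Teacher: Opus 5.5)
Your plan follows the paper's own proof step for step: the Euler--Lagrange equation on the support, at most two roots of $f$, exclusion of the uniform case, and then $d\Phi^*/dM=R_1+2(q-1)R_q-(2q-1)R_{2q}<0$. The algebra in Step~3 is correct. The gap is the one you listed as an obstacle but then assumed away in Step~1. For $q>1$ the continuous functional is unbounded above, so no global maximizer exists. To see this, take $0<M<L=\mu(\Omega)$ and let $\rho_M=\frac{1}{2M}$ on a set of measure $M$ and $\rho_M=\frac{1}{2(L-M)}$ on its complement. Then $e^{H_1}=2\sqrt{M(L-M)}$ and
\[
\frac{P_{2q}}{P_q^{2}}=\frac{M^{1-2q}+(L-M)^{1-2q}}{\bigl(M^{1-q}+(L-M)^{1-q}\bigr)^{2}}\sim\frac{1}{M},
\qquad\text{so}\qquad
\Phi[\rho_M]\sim 2\sqrt{L/M}\to\infty \quad (M\to 0^{+}).
\]
Mirroring the discrete proof fails here. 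In the discrete case the simplex is compact and $C_q=N_1D_q\le N$, because $D_q=\sum_i(\tilde w_i^{(q)})^2\le 1$. The escort information energy of a density has no such bound. Your own Step~3 already shows the problem. If a maximizer existed, Steps~1--2 would make it two-level with some core measure $M_0\in(0,L)$. Strict decrease of $\Phi^*$ would then give a two-level density with core measure $M_0/2$ and strictly larger $\Phi$. There is no ``smallest admissible core measure'' in $(0,L)$. Moreover, $M=0$ is the uniform density, which is the global \emph{minimum}, since $\ln\Phi=(H_1-H_{2q})+(2q-2)(H_q-H_{2q})\ge 0$.

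Your fallback does not repair this. Since $\Phi^*(M)\ge\Phi[\rho_M]\to+\infty$, the supremum is $+\infty$ rather than a finite value identified by a limit. Near-maximizing sequences also have no core--halo limit in $\mathcal{P}(\Omega)$: if the cores shrink to $x_0$, $\rho_M$ converges weakly to $\tfrac12\delta_{x_0}$ plus half the uniform density, which is not a density. What Steps~1--3 can actually deliver is two things: every full-support stationary density is two-level, and within the two-level family $\Phi$ increases as the core shrinks. The claim about global maximizers then holds only vacuously. The paper's proof has exactly the same defect: it simply asserts that the maximum ``is attained when the core measure assumes its minimum admissible value.''

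Step~1 also has two further problems.
\begin{enumerate}
\item The premise from Section~5.4, that concentrating densities give $\ln\Phi\to 0$, is false. $\Phi$ is invariant under $\rho\mapsto\lambda\rho(\lambda x)$, so shrinking any density preserves its value.
\item Densities vanishing on a set of positive measure are not covered. No two-sided variation exists there, so the Euler--Lagrange equation cannot be imposed, and without full support you do not get $\Omega=\Omega_c\cup\Omega_h$. These need a separate argument: moving mass $\epsilon$ onto such a set gains order $\epsilon\ln(1/\epsilon)$ in $H_1$, while the moment terms change only by $O(\epsilon)$ when $q>1$.
\end{enumerate}

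A non-vacuous version needs a resolution scale. For densities constant on cells of measure $\delta$, scale invariance makes $\Phi$ equal to the discrete functional of the cell masses with $N=L/\delta$. The problem then becomes exactly the discrete one.
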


\begin{proof}
The Euler--Lagrange equation admits exactly two density branches, $\rho_h<\rho_c$. Consequently, stationary densities are necessarily piecewise constant,
\begin{equation}
\rho(x) =
\begin{cases}
\rho_c, & x\in\Omega_c, \\
\rho_h, & x\in\Omega_h.
\end{cases}
\end{equation}

Section~7 established that the optimized complexity decreases monotonically with increasing measure of the core region, $d\Phi/dM<0$. Therefore the maximum is attained when the core measure assumes its minimum admissible value.

Hence the maximizing density consists of a single concentrated core surrounded by an extended halo. This completes the proof.
\end{proof}

\section{Universality and Structural Consequences}
\label{universality and strcutural consequences}
The core-halo theorem established in the previous section provides a complete characterization of the distributions maximizing the generalized statistical complexity. The result reveals that the extremizing distributions are neither completely ordered nor completely disordered, but instead possess an intermediate heterogeneous organization consisting of a dominant core and an extended halo.

\subsection{Emergence of heterogeneity}

The maximization problem initially allows arbitrary probability distributions over the simplex. A priori, one might expect the maximizing configurations to exhibit complicated structures involving many different probability values. However, the variational analysis shows that the stationary equation admits exactly two positive branches, $w_h<w_c$, and the multiplicity analysis further demonstrates that the global maximum is attained when the larger branch appears exactly once.

Consequently, the complexity-maximizing distribution possesses the universal structure
\begin{equation}
W^* = (w_c,w_h,\ldots,w_h),
\end{equation}
up to permutation of the coordinates. Thus, maximal complexity emerges from the coexistence of two distinct scales rather than from homogeneous randomness.

\subsection{Exclusion of complete order and complete disorder}

The two extremal configurations of the simplex correspond to:
\begin{enumerate}
    \item The completely uniform distribution: $W_u = \left(\frac{1}{N},\ldots,\frac{1}{N}\right)$,
    \item The completely localized Dirac state: $W_d = (1,0,\ldots,0)$.
\end{enumerate}

Both configurations satisfy $\Phi(W_u)=\Phi(W_d)=0$. Therefore, neither perfect disorder nor perfect order maximizes complexity. 

The complexity maximum necessarily lies between these two extremes. In this sense, complexity represents a compromise between concentration and diversity.

\subsection{Universality with respect to dimension}

An important feature of the theorem is that the resulting structure is independent of the dimension $N$ of the simplex. Although the exact numerical values of $w_c$ and $w_h$ depend on $N$ and on the deformation parameter $q$, the qualitative structure remains invariant:
\begin{equation}
\boxed{\text{one core + one halo}.}
\end{equation}

Thus, the theorem reveals a universal geometrical principle underlying the generalized statistical complexity.

\subsection{Continuous counterpart}

Exactly the same mechanism persists for continuous probability densities. The variational equation generates two density branches, $\rho_h<\rho_c$, and the complexity maximum is obtained when the measure of the high-density region assumes its smallest admissible value.

An consequence of this optimization is that the maximizing density takes the form
\begin{equation}
\rho^*(x) =
\begin{cases}
\rho_c, & x\in\Omega_c, \\
\rho_h, & x\in\Omega_h,
\end{cases}
\end{equation}
where the core region $\Omega_c$ occupies a minimal measure. Hence, the core--halo principle survives the passage from discrete probability vectors to continuous probability densities.

\subsection{Interpretation}

The theorem suggests that maximal complexity is generated by the coexistence of localized concentration and a distributed background. Symbolically, this can be framed as:
\begin{equation}
\text{Complexity} = \text{Order} \times \text{Disorder},
\end{equation}
demonstrating that neither complete randomness nor complete localization is sufficient to maximize complexity. 

Instead, the optimum configuration is achieved when these two competing tendencies coexist simultaneously. The appearance of the core--halo geometry may therefore be regarded as a universal manifestation of the perfect balance between concentration and diversity.

\section{Numerical Illustrations and Examples}
\label{numerical illustrations and examples}
The analytical results obtained in the previous sections establish the existence and uniqueness of the core--halo structure maximizing the generalized statistical complexity. In order to visualize these theoretical predictions and verify their consistency, we now present several numerical examples.

The purpose of this section is twofold. First, it provides a direct illustration of the solutions of the stationary equations and their dependence upon the deformation parameter $q$. Second, it demonstrates the monotonic decrease of the optimized complexity with increasing multiplicity, thereby confirming the $m=1$ theorem.

\subsection{Numerical procedure}

For fixed values of the dimension $N$ and deformation parameter $q$, the stationary probabilities are determined from the coupled equations
\begin{equation}
mw_c+(N-m)w_h=1,
\end{equation}
together with the equality of the stationary equations
\begin{equation}
\ln w_c + \frac{2q}{P_q}w_c^{q-1} - \frac{2q}{P_{2q}}w_c^{2q-1} = \ln w_h + \frac{2q}{P_q}w_h^{q-1} - \frac{2q}{P_{2q}}w_h^{2q-1},
\end{equation}
where
\begin{equation}
P_q = mw_c^q+(N-m)w_h^q,
\end{equation}
and
\begin{equation}
P_{2q} = mw_c^{2q}+(N-m)w_h^{2q}.
\end{equation}

For each admissible multiplicity $m=1,2,\ldots,N-1$, the nonlinear system is solved numerically to obtain the corresponding values of $w_c$ and $w_h$.

The optimized complexity is then computed from
\begin{equation}
\Phi_m = \exp \left[ H_1 + (2q-2)H_q - (2q-1)H_{2q} \right].
\end{equation}

The same procedure may be extended to continuous probability densities by replacing the discrete multiplicity $m$ by the measure $M$ of the core region and solving the corresponding Euler--Lagrange equations.

The numerical results obtained from these equations allow us to investigate:
\begin{enumerate}
\item the dependence of the stationary probabilities $(w_c,w_h)$ on the deformation parameter $q$;
\item the evolution of the optimized complexity as a function of the multiplicity parameter $m$;
\item the convergence towards the uniform and Dirac limits;
\item the emergence of the core--halo geometry;
\item the validity of the monotonicity theorem, $\Phi(1) > \Phi(2) > \cdots > \Phi(N-1)$.
\end{enumerate}

In the following subsections, several representative examples are presented in order to illustrate these properties and to compare the theoretical predictions with numerical computations.

\subsection{Dependence of the stationary branches on the deformation parameter}

One of the most important characteristics of the generalized statistical complexity is the dependence of the stationary branches upon the deformation parameter $q$. Since the quantities $w_c$ and $w_h$ are determined implicitly by the stationary equations, their values vary continuously with the parameter $q$.

For a fixed dimension $N$ and fixed multiplicity $m$, the stationary probabilities are obtained from
\begin{equation}
mw_c+(N-m)w_h=1,
\end{equation}
together with
\begin{equation}
\ln\frac{w_c}{w_h} = -\frac{2q}{P_q} \left( w_c^{q-1}-w_h^{q-1} \right) + \frac{2q}{P_{2q}} \left( w_c^{2q-1}-w_h^{2q-1} \right),
\end{equation}
where
\begin{equation}
P_q = mw_c^q+(N-m)w_h^q,
\end{equation}
and
\begin{equation}
P_{2q} = mw_c^{2q}+(N-m)w_h^{2q}.
\end{equation}

These equations define two stationary branches, $w_h(q)<w_c(q)$, whose separation depends on the value of the deformation parameter.

For values of $q$ close to unity, the difference between the two branches remains relatively small and the maximizing distribution approaches a nearly homogeneous state. In this regime, the generalized complexity differs only slightly from the ordinary LMC complexity.

As $q$ increases, the contribution of the higher-order Rényi entropies becomes progressively stronger. Consequently, the competition between disorder and disequilibrium becomes more pronounced, leading to a larger separation between the two stationary branches. The dominant probability $w_c$ increases, while the halo probability $w_h$ decreases.

Hence, increasing $q$ enhances the degree of heterogeneity of the complexity-maximizing distribution. This behavior is illustrated schematically by $w_c(q)\uparrow$ and $w_h(q)\downarrow$ as $q\uparrow$.

The resulting widening of the gap $\Delta(q) = w_c(q)-w_h(q)$ may therefore be regarded as a quantitative measure of the emergence of the core--halo structure.

In the limit of large $q$, the dominant branch becomes increasingly concentrated, whereas the halo branch becomes progressively smaller. Nevertheless, the two branches remain finite and positive, thereby preserving the coexistence of concentration and diversity required for maximal complexity.

Therefore, the deformation parameter $q$ acts as a control parameter governing the strength of the core--halo separation.

Numerical solutions of the stationary equations confirm these predictions and show that the functions $w_c(q)$ and $w_h(q)$ vary smoothly with $q$ and exhibit monotonic tendencies over a broad range of parameter values.

The dependence of the branch separation on $q$ demonstrates that the generalized complexity generates an entire family of core--halo distributions, with the ordinary LMC complexity appearing as a particular member of this family.

\subsection{Complexity as a function of multiplicity}

The analytical developments of Section~7 established that the generalized statistical complexity is a strictly decreasing function of the multiplicity parameter. In particular, the $m$-core theorem predicts that $\Phi(1) > \Phi(2) > \cdots > \Phi(N-1)$, and therefore that the global maximum is always attained by the one-core configuration.

The purpose of this subsection is to investigate this behavior numerically and to illustrate the monotonic dependence of the optimized complexity upon the multiplicity parameter.

For fixed values of the dimension $N$ and deformation parameter $q$, the stationary equations are solved for each admissible multiplicity $m=1,2,\ldots,N-1$, yielding the corresponding branches $w_c(m)$ and $w_h(m)$.

The optimized complexity associated with the $m$-core configuration is then given by
\begin{equation}
\Phi(m) = \exp \left[ H_1 + (2q-2)H_q - (2q-1)H_{2q} \right],
\end{equation}
where
\begin{align}
H_1 &= -mw_c\ln w_c - (N-m)w_h\ln w_h, \\
H_q &= \frac{1}{1-q} \ln \left( mw_c^q+(N-m)w_h^q \right), \\
H_{2q} &= \frac{1}{1-2q} \ln \left( mw_c^{2q} + (N-m)w_h^{2q} \right).
\end{align}

The numerical solutions reveal a remarkable and highly regular behavior. For every value of $q$ examined, the complexity decreases monotonically with increasing multiplicity. Consequently, the sequence $\Phi(1),\Phi(2),\ldots,\Phi(N-1)$ forms a strictly decreasing hierarchy.

This hierarchy may be represented schematically as $\Phi(1) > \Phi(2) > \Phi(3) > \cdots > \Phi(N-1)$. Therefore, the one-core configuration possesses the largest possible complexity, whereas configurations containing several core components correspond to progressively smaller complexity values.

The physical interpretation of this result is straightforward. Increasing the multiplicity spreads the concentrated region over a larger number of coordinates and therefore reduces the asymmetry between the core and the halo. As the distinction between the two branches becomes weaker, the system moves closer to homogeneous configurations and consequently loses complexity. Hence, the complexity decreases as the number of core components increases.

In particular, the limiting configuration $m=N-1$ corresponds to a situation in which almost all components belong to the dominant branch. In this regime, the contrast between the two branches becomes weak and the resulting complexity is considerably smaller than that of the one-core state.

The numerical computations therefore provide a direct verification of the analytical monotonicity theorem established in Section~7. They demonstrate that the inequality
\begin{equation}
\boxed{\Phi(1) > \Phi(2) > \cdots > \Phi(N-1)}
\end{equation}
holds universally and independently of the dimension $N$ and of the deformation parameter $q$.

Consequently, the numerical analysis confirms that the one-core configuration constitutes the unique global maximizer of the generalized statistical complexity.

For continuous probability densities, an entirely analogous behavior is observed. Replacing the discrete multiplicity $m$ by the measure $M$ of the core region, one finds that $d\Phi/dM < 0$, showing that the optimized complexity decreases continuously as the measure of the core region increases.

Therefore, both discrete and continuous systems exhibit the same universal monotonic structure, providing strong numerical support for the generalized core--halo theorem.

\subsection{Visualization of the core--halo distributions}

The numerical analysis presented in the previous subsection demonstrates that the one-core configuration maximizes the generalized statistical complexity. We now examine the geometrical structure of these maximizing distributions.

For discrete systems, the maximizing probability vector possesses the universal form
\begin{equation}
W^* = \left( w_c, \underbrace{ w_h,\ldots,w_h }_{N-1} \right),
\end{equation}
where $w_c>w_h>0$.

Thus, a single component carries a larger probability weight, whereas the remaining components form a nearly uniform background. This structure is naturally interpreted as a core surrounded by a halo.

From a geometrical viewpoint, the probability vector may therefore be regarded as consisting of two distinct scales:
\begin{enumerate}
\item a localized high-probability region represented by the core probability $w_c$;
\item an extended low-probability background represented by the halo probability $w_h$.
\end{enumerate}

The separation between these two scales is measured by the gap $\Delta = w_c-w_h$.

For values of $q$ close to unity, the branch separation remains relatively small, and the probability vector appears almost homogeneous. As the deformation parameter increases, the dominant component becomes progressively larger while the halo components become smaller. Consequently, the contrast between the two scales becomes increasingly pronounced.

Schematically, one may visualize the evolution as $\left(\frac{1}{N},\ldots,\frac{1}{N}\right) \longrightarrow \left(w_c,w_h,\ldots,w_h\right)$, where the difference between $w_c$ and $w_h$ grows with increasing $q$.

The corresponding probability histograms exhibit a characteristic two-level structure. One observes a single dominant peak associated with the core probability together with a broad plateau formed by the halo probabilities. This behavior is illustrated conceptually in Fig.~\ref{fig:core_halo_discrete}.

\begin{figure}[h]
\centering
\includegraphics[width=0.8\textwidth]{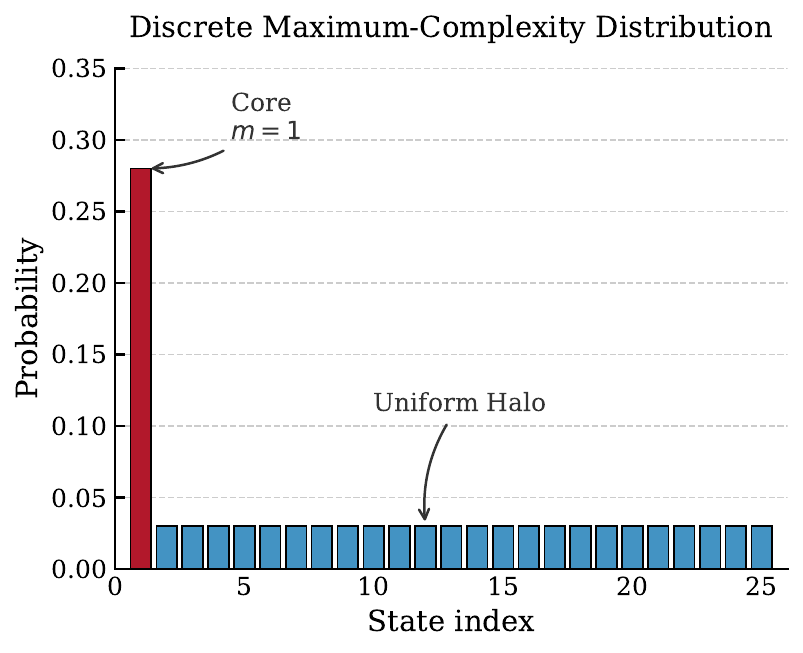}
\caption{Schematic representation of the complexity-maximizing discrete distribution. One dominant component forms the core, while the remaining components constitute a uniform halo.}
\label{fig:core_halo_discrete}
\end{figure}

For continuous probability densities, the same phenomenon persists. The maximizing density assumes the form
\begin{equation}
\rho^*(x) =
\begin{cases}
\rho_c, & x\in\Omega_c, \\[1ex]
\rho_h, & x\in\Omega_h,
\end{cases}
\end{equation}
with $\rho_c>\rho_h$.

Therefore, the density profile consists of a compact region of high density immersed within a low-density background. In analogy with the discrete case, these regions are naturally interpreted as the core and the halo, respectively.

The measure of the core region is minimal, whereas the halo occupies the majority of the support. This structure is illustrated schematically in Fig.~\ref{fig:core_halo_continuous}.

\begin{figure}[h]
\centering
\includegraphics[width=0.8\textwidth]{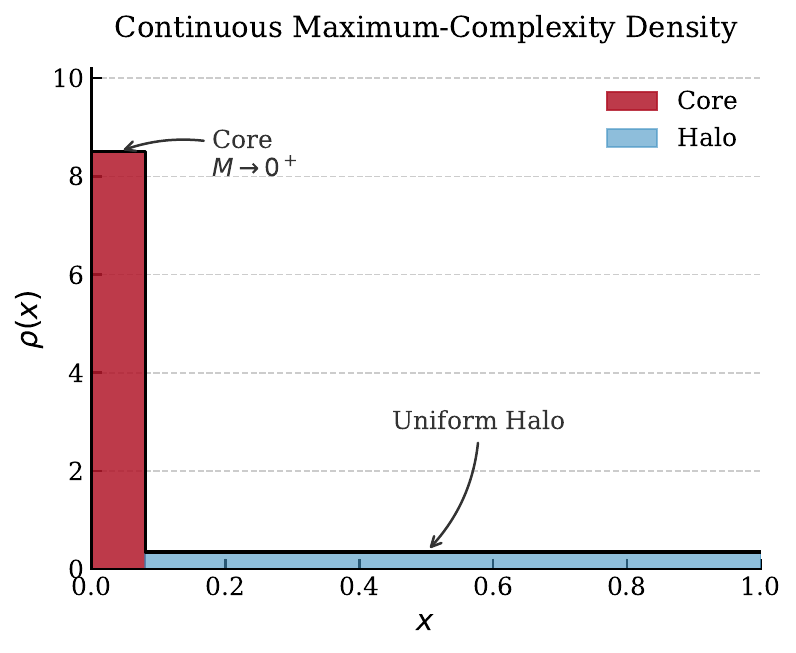}
\caption{Schematic representation of the complexity-maximizing continuous density. A localized high-density core is embedded within an extended low-density halo.}
\label{fig:core_halo_continuous}
\end{figure}

Hence, despite the fundamental differences between discrete probability vectors and continuous probability densities, both systems exhibit exactly the same two-scale organization. The emergence of this structure constitutes one of the most remarkable manifestations of the generalized statistical complexity and provides strong numerical support for the core--halo theorem established in Section~8.

\subsection{Comparison with the ordinary LMC complexity}

The generalized statistical complexity introduced in the present work contains the ordinary LMC complexity as a particular member of a much larger family. It is therefore natural to investigate the relation between the two theories and to understand how the deformation parameter modifies the structure of the complexity-maximizing distributions.

The original LMC complexity is defined as
\begin{equation}
C_{\mathrm{LMC}} = H\times D,
\end{equation}
where
\begin{equation}
H = -\sum_{i=1}^{N}w_i\ln w_i
\end{equation}
denotes the Shannon entropy and
\begin{equation}
D = \sum_{i=1}^{N} \left( w_i-\frac{1}{N} \right)^2
\end{equation}
represents the disequilibrium.

In contrast, the generalized complexity investigated in the present work takes the form
\begin{equation}
\Phi = \exp \left[ H_1 + (2q-2)H_q - (2q-1)H_{2q} \right],
\end{equation}
which depends simultaneously upon three Rényi entropies.

Consequently, whereas the ordinary LMC complexity balances disorder against Euclidean disequilibrium, the generalized complexity measures the competition among information contained at three different scales.

An important difference emerges from the structure of the maximizing distributions.

For the ordinary LMC complexity, the complexity maximum is already known to possess a core--halo form, $W^* = (w_c,w_h,\ldots,w_h)$, with one dominant probability component and a uniform halo.

The generalized complexity preserves exactly the same qualitative structure. However, the deformation parameter $q$ introduces a continuous family of complexity measures and consequently generates an entire family of core--halo distributions. Therefore, the ordinary LMC complexity appears as a particular member of a broader universality class.

From the numerical analysis, one observes that increasing $q$ tends to enlarge the difference $\Delta = w_c-w_h$, thereby strengthening the distinction between the core and the halo.

Hence the parameter $q$ acts as a measure of heterogeneity. Small values of $q$ produce relatively weak contrasts between the two branches, whereas larger values lead to increasingly pronounced core--halo structures. Schematically, one finds $q\uparrow \implies \Delta\uparrow$.

Thus the generalized theory interpolates between different degrees of structural organization while preserving the same underlying geometry.

Another remarkable feature is the persistence of the $m=1$ theorem. Numerical computations indicate that the hierarchy $\Phi(1) > \Phi(2) > \cdots > \Phi(N-1)$ holds throughout the entire range of deformation parameters examined.

Therefore, the one-core structure is not a special property of the ordinary LMC complexity but rather a universal consequence of the generalized variational equations. This observation suggests that the core--halo geometry is considerably more fundamental than the particular choice of complexity functional.

Indeed, both the ordinary and generalized theories reveal the same qualitative principle:

\begin{center}
\emph{Maximum complexity does not arise from complete order or complete disorder, but from the coexistence of concentration and diversity.}
\end{center}

Consequently, the generalized statistical complexity should be viewed not as a replacement for the ordinary LMC complexity, but as its natural extension. The original LMC measure corresponds merely to one point inside a continuous family of complexity measures sharing the same universal geometrical structure.

\section{Physical Interpretation and Universality}
\label{physical interpretation and universality}
The preceding sections establish that the generalized statistical complexity possesses a remarkable and highly rigid structure. Regardless of the dimension of the probability space, the value of the deformation parameter, or whether the system is discrete or continuous, the maximizing configurations always consist of two distinct probability branches organized into a core--halo geometry.

This universality suggests that the phenomenon transcends the specific form of the complexity measure and reflects a more fundamental principle governing the interplay between order and disorder.

\subsection{Complexity as a balance between concentration and diversity}

Two limiting configurations naturally exist in every probability space. 

The first corresponds to complete disorder, namely the uniform distribution $W_u= \left(\frac{1}{N},\ldots,\frac{1}{N}\right)$, which maximizes entropy while minimizing concentration. The second corresponds to complete order, represented by the Dirac distribution $W_d= (1,0,\ldots,0)$, which maximizes concentration while minimizing entropy.

Both extremes possess vanishing complexity, $\Phi(W_u)=\Phi(W_d)=0$. Therefore, neither complete randomness nor complete localization is capable of generating maximal complexity. Instead, the maximum necessarily occurs between these two extremes.

Hence, complexity emerges from the coexistence of two competing tendencies:
\begin{enumerate}
\item \emph{Diversity}, which favors entropy and tends to spread probability uniformly;
\item \emph{Concentration}, which favors disequilibrium and tends to localize probability.
\end{enumerate}

The core--halo structure represents the equilibrium established between these two antagonistic mechanisms.

\subsection{Emergence of two scales}

The variational analysis revealed that the stationary equation admits exactly two positive solutions, $w_h<w_c$. Consequently, the maximizing distribution contains two distinct probability scales. 

The larger branch $w_c$ defines a localized core, while the smaller branch $w_h$ forms an extended halo. Thus, maximal complexity is associated with the emergence of a two-scale organization.

This behavior is reminiscent of numerous physical systems in which highly concentrated regions coexist with diffuse backgrounds. Examples include:
\begin{itemize}
\item Galaxies embedded within dark matter halos,
\item Stars surrounded by atmospheres,
\item Atomic nuclei immersed in electron clouds,
\item Condensed phases coexisting with dilute phases,
\item Coherent structures emerging from turbulent backgrounds,
\item Cluster formation in complex networks.
\end{itemize}

Although the present theory is purely information-theoretic, the appearance of the same geometrical organization suggests the existence of deep connections between complexity and self-organization.

\subsection{Universality across discrete and continuous systems}

One of the most striking results obtained in this work is the complete equivalence between discrete and continuous probability spaces. 

In the discrete case, the maximizer assumes the form $W^* = (w_c,w_h,\ldots,w_h)$, whereas in the continuous case the maximizing density becomes
\begin{equation}
\rho^*(x) =
\begin{cases}
\rho_c, & x\in\Omega_c, \\
\rho_h, & x\in\Omega_h.
\end{cases}
\end{equation}

Although the mathematical representations differ, the underlying structure remains identical. In both cases the system decomposes into:
\begin{equation}
\boxed{\text{Core} + \text{Halo}.}
\end{equation}

Therefore, the core--halo principle appears to constitute a universal geometrical property of complexity maximization.

\subsection{A universal principle of maximal complexity}

The results obtained in this work suggest the following principle:
\begin{center}
\emph{Maximum complexity arises neither from perfect order nor from complete disorder, but from the coexistence of concentration and diversity.}
\end{center}

Equivalently,
\begin{center}
\emph{The most complex states are characterized by the simultaneous presence of localized structures and extended backgrounds.}
\end{center}

From this viewpoint, the core--halo theorem may be regarded as a mathematical manifestation of the balance between order and randomness. This principle appears to be independent of the particular dimension of the system, the nature of the probability space, and even the specific form of the generalized statistical complexity.

Consequently, the emergence of the core--halo geometry may represent a universal signature of complexity itself.

\section{Conclusions and Future Perspectives}
\label{conclusion and future perspective}
In this work, we have developed a generalized formulation of the López--Ruiz, Mancini, and Calbet statistical complexity based entirely on Rényi entropies. The resulting complexity functional
\begin{equation}
\Phi = \exp \left[ H_1 + (2q-2)H_q - (2q-1)H_{2q} \right]
\end{equation}
extends the ordinary LMC complexity and introduces a continuous family of complexity measures parameterized by the deformation parameter $q$.

The variational analysis of this functional revealed several remarkable features. First, the stationary equation was shown to possess at most two positive solutions. By combining this result with the observation that both the completely ordered and completely disordered configurations yield zero complexity, we established that the possibilities of zero and one stationary roots are excluded. Consequently, the stationary equation necessarily possesses exactly two positive branches.

This result immediately leads to the emergence of a two-valued structure characterized by a dominant branch and a secondary branch. In the discrete setting, the maximizing probability distribution assumes the form
\begin{equation}
W = ( \underbrace{w_c,\ldots,w_c}_{m}, \underbrace{w_h,\ldots,w_h}_{N-m} ),
\end{equation}
where $w_c>w_h$.

The continuous case exhibits an analogous behavior, with the maximizing density taking two constant values over complementary regions of the support. Thus, both discrete and continuous probability spaces display the same fundamental geometrical organization.

The multiplicity analysis performed through the envelope theorem further demonstrated that the optimized complexity decreases monotonically with the multiplicity parameter. Consequently, $\Phi(1) > \Phi(2) > \cdots > \Phi(N-1)$, showing that the one-core configuration is the unique global maximizer.

These results culminated in the discrete and continuous core--halo theorems. The complexity-maximizing distributions were shown to possess a universal structure consisting of a localized core embedded within an extended halo. This geometry emerges naturally from the competition between entropy and concentration and is independent of the dimension of the probability space or the value of the deformation parameter.

An important conceptual consequence of the present analysis is that maximal complexity is achieved neither by complete randomness nor by complete order. The uniform distribution and the Dirac distribution both possess vanishing complexity. Instead, complexity reaches its maximum through the coexistence of concentration and diversity. From this perspective, the core--halo geometry may be interpreted as a universal manifestation of the balance between order and disorder.

The results obtained here suggest that the core--halo principle is considerably more fundamental than the specific generalized complexity functional considered in this work. Indeed, the emergence of two scales and the preference for one-core configurations indicate the existence of a deeper geometrical mechanism underlying complexity maximization.

Several directions for future investigations naturally arise.

First, it would be interesting to extend the present analysis to other generalized entropic frameworks, including Tsallis, Sharma--Mittal, and Kaniadakis entropies, in order to determine whether the core--halo theorem persists beyond the Rényi family.

Second, the relation between the present theory and majorization could provide a deeper understanding of the ordering properties of complexity-maximizing distributions. In particular, it would be worthwhile to investigate whether the one-core configuration possesses an extremal characterization in terms of majorization lattices.

Third, one may ask whether analogous structures appear in quantum information theory. The extension of the present framework to density operators and quantum Rényi entropies may lead to the existence of quantum core--halo states.

Another promising direction concerns dynamical systems. The present work addresses static maximization problems, but it remains an open question whether nonequilibrium processes naturally evolve toward core--halo configurations. Such an investigation may reveal connections between complexity optimization and self-organization.

Finally, the remarkable similarity between the mathematical structures obtained here and the core--halo patterns observed in many physical systems suggests that the present results may have applications beyond information theory. Examples include gravitational systems, network theory, condensed matter physics, and biological organization.

In summary, the present work provides a complete analytical characterization of the distributions maximizing the generalized statistical complexity. The resulting theory reveals a universal geometrical principle: maximal complexity emerges from the coexistence of a concentrated core and an extended halo. This suggests that the core--halo structure may constitute a fundamental signature of complexity itself.

\section{Data Availability}
No new experimental data were created or analyzed in this study. All mathematical proofs and analytical results are contained within the manuscript.

\section{Declaration of AI Usage}
During the preparation of this manuscript, the author utilized generative AI tools solely for the purposes of language polishing, grammar checking, and manuscript formatting. These tools were not used to generate any scientific concepts, mathematical proofs, or primary results. The author has thoroughly reviewed and edited all AI-assisted text and assumes full responsibility for the final content and integrity of this work.

\appendix
\label{Appendix}
\section{Detailed derivation of the Euler--Lagrange equations}

In this appendix we present the complete derivation of the stationary equations underlying the generalized statistical complexity introduced in the main text.

\subsection{Discrete case}

Consider a discrete probability distribution
\begin{equation}
W=(w_1,w_2,\ldots,w_N),
\end{equation}
subject to the normalization constraint
\begin{equation}
\sum_{i=1}^{N}w_i=1.
\end{equation}

Define
\begin{equation}
P_q=\sum_{i=1}^{N}w_i^q, \qquad P_{2q}=\sum_{i=1}^{N}w_i^{2q},
\end{equation}
and let
\begin{equation}
H_1(W) = -\sum_{i=1}^{N}w_i\ln w_i
\end{equation}
denote the Shannon entropy.

The generalized complexity functional considered in the main text is
\begin{equation}
\Phi = H_1(W) + \ln P_{2q} - 2\ln P_q.
\end{equation}

Equivalently, using the Rényi entropy representation,
\begin{equation}
\Phi = H_1 + (2q-2)H_q - (2q-1)H_{2q}.
\end{equation}

Introducing a Lagrange multiplier $\lambda$ associated with the normalization constraint, we define the Lagrangian
\begin{equation}
\mathcal{L} = H_1 + \ln P_{2q} - 2\ln P_q + \lambda \left( \sum_{i=1}^{N}w_i-1 \right).
\end{equation}

Differentiating the Shannon entropy gives
\begin{equation}
\frac{\partial H_1}{\partial w_i} = -(\ln w_i+1).
\end{equation}

Furthermore,
\begin{equation}
\frac{\partial P_q}{\partial w_i} = qw_i^{q-1},
\end{equation}
and
\begin{equation}
\frac{\partial P_{2q}}{\partial w_i} = 2qw_i^{2q-1}.
\end{equation}

Hence,
\begin{equation}
\frac{\partial}{\partial w_i}\ln P_{2q} = \frac{2qw_i^{2q-1}}{P_{2q}},
\end{equation}
while
\begin{equation}
\frac{\partial}{\partial w_i}\left(-2\ln P_q\right) = -\frac{2qw_i^{q-1}}{P_q}.
\end{equation}

Therefore,
\begin{equation}
\frac{\partial\mathcal{L}}{\partial w_i} = -(\ln w_i+1) + \frac{2qw_i^{2q-1}}{P_{2q}} - \frac{2qw_i^{q-1}}{P_q} + \lambda.
\end{equation}

Imposing the stationarity condition
\begin{equation}
\frac{\partial\mathcal{L}}{\partial w_i}=0,
\end{equation}
one obtains
\begin{equation}
-(\ln w_i+1) + \frac{2qw_i^{2q-1}}{P_{2q}} - \frac{2qw_i^{q-1}}{P_q} + \lambda = 0.
\end{equation}

Multiplying by $-1$ yields
\begin{equation}
\ln w_i + \frac{2q}{P_q}w_i^{q-1} - \frac{2q}{P_{2q}}w_i^{2q-1} + (1-\lambda) = 0.
\end{equation}

Introducing the abbreviations
\begin{equation}
A_q = \frac{2q}{P_q},
\end{equation}
\begin{equation}
B_q = \frac{2q}{P_{2q}},
\end{equation}
and
\begin{equation}
C = 1-\lambda,
\end{equation}
the stationary equation assumes the compact form
\begin{equation}
\boxed{\ln w_i + A_qw_i^{q-1} - B_qw_i^{2q-1} + C = 0.}
\end{equation}

This equation constitutes the fundamental stationary equation studied throughout the paper.

\subsection{Continuous case}

Consider a probability density function
\begin{equation}
\rho(x)\ge0,
\end{equation}
defined on a domain $\Omega$, satisfying the normalization condition
\begin{equation}
\int_{\Omega}\rho(x)\,dx=1.
\end{equation}

Define
\begin{equation}
P_q=\int_{\Omega}\rho(x)^q\,dx, \qquad P_{2q}=\int_{\Omega}\rho(x)^{2q}\,dx,
\end{equation}
and let
\begin{equation}
H_1[\rho] = -\int_{\Omega}\rho(x)\ln\rho(x)\,dx
\end{equation}
denote the continuous Shannon entropy.

The generalized complexity functional is
\begin{equation}
\Phi[\rho] = H_1[\rho] + \ln P_{2q} - 2\ln P_q.
\end{equation}

Equivalently, in terms of Rényi entropies,
\begin{equation}
\Phi = H_1 + (2q-2)H_q - (2q-1)H_{2q}.
\end{equation}

Introducing a Lagrange multiplier $\lambda$ associated with normalization, we define
\begin{equation}
\mathcal{L}[\rho] = H_1[\rho] + \ln P_{2q} - 2\ln P_q + \lambda \left( \int_{\Omega}\rho(x)\,dx-1 \right).
\end{equation}

To obtain the stationary density, we compute the functional derivative.

The Shannon contribution yields
\begin{equation}
\frac{\delta H_1}{\delta\rho(x)} = -(\ln\rho(x)+1).
\end{equation}

Furthermore,
\begin{equation}
\frac{\delta P_q}{\delta\rho(x)} = q\rho(x)^{q-1},
\end{equation}
and
\begin{equation}
\frac{\delta P_{2q}}{\delta\rho(x)} = 2q\rho(x)^{2q-1}.
\end{equation}

Therefore,
\begin{equation}
\frac{\delta}{\delta\rho(x)} \ln P_{2q} = \frac{2q\rho(x)^{2q-1}}{P_{2q}},
\end{equation}
while
\begin{equation}
\frac{\delta}{\delta\rho(x)} \left(-2\ln P_q\right) = -\frac{2q\rho(x)^{q-1}}{P_q}.
\end{equation}

Hence the Euler--Lagrange equation becomes
\begin{equation}
\frac{\delta\mathcal{L}}{\delta\rho(x)} = -(\ln\rho(x)+1) + \frac{2q\rho(x)^{2q-1}}{P_{2q}} - \frac{2q\rho(x)^{q-1}}{P_q} + \lambda.
\end{equation}

Imposing stationarity,
\begin{equation}
\frac{\delta\mathcal{L}}{\delta\rho(x)} = 0,
\end{equation}
gives
\begin{equation}
-(\ln\rho(x)+1) + \frac{2q\rho(x)^{2q-1}}{P_{2q}} - \frac{2q\rho(x)^{q-1}}{P_q} + \lambda = 0.
\end{equation}

Multiplying by $-1$, we obtain
\begin{equation}
\ln\rho(x) + \frac{2q}{P_q}\rho(x)^{q-1} - \frac{2q}{P_{2q}}\rho(x)^{2q-1} + (1-\lambda) = 0.
\end{equation}

Introducing
\begin{equation}
A_q = \frac{2q}{P_q},
\end{equation}
\begin{equation}
B_q = \frac{2q}{P_{2q}},
\end{equation}
and
\begin{equation}
C = 1-\lambda,
\end{equation}
the stationary equation assumes the compact form
\begin{equation}
\boxed{\ln\rho(x) + A_q\rho(x)^{q-1} - B_q\rho(x)^{2q-1} + C = 0.}
\end{equation}

This equation is identical in structure to the discrete stationary equation and forms the basis of the continuous core--halo analysis developed in the main text.

\section{Shape analysis of the stationary equation}

\subsection{The stationary function}

The stationary equations derived in Appendix A may be written in the unified form
\begin{equation}
f(x) = \ln x + A_qx^{q-1} - B_qx^{2q-1} + C,
\end{equation}
where
\begin{equation}
A_q = \frac{2q}{P_q}, \qquad B_q = \frac{2q}{P_{2q}},
\end{equation}
and $C=1-\lambda$.

The stationary solutions correspond to the positive roots of
\begin{equation}
f(x)=0.
\end{equation}

Since
\begin{equation}
\lim_{x\to0^+}\ln x=-\infty,
\end{equation}
one has
\begin{equation}
\lim_{x\to0^+}f(x)=-\infty.
\end{equation}

Similarly, because the highest-order contribution is negative,
\begin{equation}
-B_qx^{2q-1},
\end{equation}
it follows that
\begin{equation}
\lim_{x\to\infty}f(x)=-\infty.
\end{equation}

Hence the stationary function tends to negative infinity at both ends of the positive real axis. Consequently, the number of positive roots is completely determined by the number and arrangement of its critical points.

To characterize the extrema of the stationary equation, we introduce the auxiliary function obtained from the first derivative.

\subsection{Critical points of the stationary equation}

The extrema of the stationary function
\begin{equation}
f(x) = \ln x + A_qx^{q-1} - B_qx^{2q-1} + C
\end{equation}
are determined by its first derivative.

Differentiating with respect to $x$ yields
\begin{equation}
f'(x) = \frac{1}{x} + (q-1)A_qx^{q-2} - (2q-1)B_qx^{2q-2}.
\end{equation}

Since $x>0$, multiplication by $x$ preserves the location of the zeros. It is therefore convenient to introduce the auxiliary function
\begin{equation}
\boxed{g(x) = xf'(x) = 1 + (q-1)A_qx^{q-1} - (2q-1)B_qx^{2q-1}.}
\end{equation}

Consequently,
\begin{equation}
f'(x)=0 \qquad\Longleftrightarrow\qquad g(x)=0.
\end{equation}

Hence the critical points of the stationary equation are in one-to-one correspondence with the positive roots of the auxiliary function $g(x)$.

To investigate the structure of $g(x)$, we differentiate once more:
\begin{equation}
g'(x) = (q-1)^2A_qx^{q-2} - (2q-1)^2B_qx^{2q-2}.
\end{equation}

Setting
\begin{equation}
g'(x)=0
\end{equation}
gives
\begin{equation}
(q-1)^2A_qx^{q-2} = (2q-1)^2B_qx^{2q-2},
\end{equation}
or equivalently,
\begin{equation}
x^q = \frac{(q-1)^2A_q}{(2q-1)^2B_q}.
\end{equation}

Therefore the derivative $g'(x)$ possesses the unique positive zero
\begin{equation}
\boxed{x_c = \left( \frac{(q-1)^2A_q}{(2q-1)^2B_q} \right)^{1/q}.}
\end{equation}

Thus the auxiliary function $g(x)$ possesses exactly one critical point. Moreover,

\begin{equation}
g'(x)>0, \qquad 0<x<x_c,
\end{equation}

and

\begin{equation}
g'(x)<0, \qquad x>x_c.
\end{equation}

Hence $g(x)$ increases monotonically on $(0,x_c)$ and decreases monotonically on $(x_c,\infty)$, so that $x_c$ corresponds to the unique global maximum of the auxiliary function.

The existence and uniqueness of this maximum constitute the key geometrical property governing the number of stationary branches of the complexity equation.

\subsection{Uniqueness of the positive root of the auxiliary function}

The preceding subsection established that the auxiliary function
\begin{equation}
g(x) = 1 + (q-1)A_qx^{q-1} - (2q-1)B_qx^{2q-1}
\end{equation}
possesses a unique critical point located at
\begin{equation}
x_c = \left( \frac{(q-1)^2A_q}{(2q-1)^2B_q} \right)^{1/q},
\end{equation}
which corresponds to its global maximum.

To determine the number of positive roots of $g(x)$, we first examine its behavior near the origin. Since
\begin{equation}
\lim_{x\to0^+}x^{q-1}=0, \qquad \lim_{x\to0^+}x^{2q-1}=0,
\end{equation}
one immediately obtains
\begin{equation}
\boxed{g(0^+)=1.}
\end{equation}

Hence the auxiliary function is initially positive.

Next, consider the asymptotic limit $x\rightarrow\infty$. Since the highest-order term dominates,
\begin{equation}
g(x) = 1 + (q-1)A_qx^{q-1} - (2q-1)B_qx^{2q-1},
\end{equation}
and because
\begin{equation}
2q-1>q-1,
\end{equation}
the negative contribution dominates for sufficiently large $x$. Therefore,
\begin{equation}
\boxed{\lim_{x\to\infty}g(x) = -\infty.}
\end{equation}

Thus the auxiliary function starts from
\begin{equation}
g(0^+)=1>0,
\end{equation}
attains a unique global maximum at $x=x_c$, and subsequently decreases monotonically toward $-\infty$.

Since $g(x)$ is continuous on $(0,\infty)$, the intermediate value theorem guarantees the existence of at least one positive root.

Furthermore, because the function is strictly increasing on $(0,x_c)$ and strictly decreasing on $(x_c,\infty)$, no more than one intersection with the horizontal axis is possible. Consequently, the positive root is unique.

Hence
\begin{equation}
\boxed{g(x)=0 \text{ possesses exactly one positive solution.}}
\end{equation}

Since
\begin{equation}
g(x)=xf'(x),
\end{equation}
the stationary function $f(x)$ possesses exactly one critical point.

Therefore the graph of $f(x)$ is unimodal. Because
\begin{equation}
\lim_{x\to0^+}f(x) = -\infty, \qquad \lim_{x\to\infty}f(x) = -\infty,
\end{equation}
a unimodal stationary function can intersect the horizontal axis at most twice.

Consequently,
\begin{equation}
\boxed{f(x)=0 \text{ possesses at most two positive roots.}}
\end{equation}

This establishes the fundamental geometrical restriction underlying the two-valued structure of the complexity-maximizing distributions.

\subsection{Exclusion of the zero-root and one-root cases}

The previous subsection established that the stationary equation
\begin{equation}
f(x)=0
\end{equation}
possesses at most two positive roots. We now show that the possibilities of zero roots and one root are both impossible.

\paragraph{Exclusion of the zero-root case.}

Suppose that the stationary equation possesses no positive roots. Then no interior stationary point exists, and therefore the maximum of complexity must necessarily occur on the boundary of the probability simplex.

However, the two extremal configurations are the uniform distribution
\begin{equation}
W_{\mathrm{u}} = \left( \frac{1}{N},\ldots,\frac{1}{N} \right),
\end{equation}
and the Dirac distribution
\begin{equation}
W_{\mathrm{d}} = (1,0,\ldots,0).
\end{equation}

Both configurations satisfy
\begin{equation}
\Phi(W_{\mathrm{u}}) = \Phi(W_{\mathrm{d}}) = 0.
\end{equation}

Consequently,
\begin{equation}
\Phi_{\max}=0.
\end{equation}

On the other hand, there exist nontrivial probability distributions for which
\begin{equation}
\Phi>0.
\end{equation}

Hence
\begin{equation}
\Phi_{\max}>0,
\end{equation}
which contradicts the assumption that the maximizer lies on the boundary.

Therefore,
\begin{equation}
\boxed{N_r\neq0,}
\end{equation}
where $N_r$ denotes the number of positive roots of the stationary equation.

\paragraph{Exclusion of the one-root case.}

Suppose next that the stationary equation possesses exactly one positive root, denoted by $x_0$.

Since every probability component satisfies
\begin{equation}
f(w_i)=0,
\end{equation}
all components must coincide:
\begin{equation}
w_1=w_2=\cdots=w_N=x_0.
\end{equation}

Normalization implies
\begin{equation}
Nx_0=1,
\end{equation}
and therefore
\begin{equation}
x_0=\frac{1}{N}.
\end{equation}

Hence the one-root case corresponds uniquely to the uniform distribution.

But the uniform distribution satisfies
\begin{equation}
\Phi=0,
\end{equation}
which again contradicts the existence of probability distributions with positive complexity.

Consequently,
\begin{equation}
\boxed{N_r\neq1.}
\end{equation}

\paragraph{The two-root theorem.}

Since the previous subsection established that
\begin{equation}
N_r\le2,
\end{equation}
while the preceding arguments imply
\begin{equation}
N_r\neq0, \qquad N_r\neq1,
\end{equation}
the only remaining possibility is
\begin{equation}
\boxed{N_r=2.}
\end{equation}

Therefore the stationary equation possesses exactly two positive solutions,
\begin{equation}
x_h<x_c,
\end{equation}
and every component of a complexity-maximizing distribution must assume one of these two values.

Consequently, the complexity maximizer necessarily possesses the form
\begin{equation}
W = ( \underbrace{x_c,\ldots,x_c}_{m}, \underbrace{x_h,\ldots,x_h}_{N-m} ),
\end{equation}
thereby establishing the origin of the core--halo structure.

\subsection{Continuous analogue of the two-root theorem}

The preceding arguments extend naturally to the continuous case.

Let
\begin{equation}
\rho(x)\ge0,
\end{equation}
be a normalized probability density satisfying
\begin{equation}
\int_{\Omega}\rho(x)\,dx=1.
\end{equation}

The stationary equation derived in Appendix A assumes the form
\begin{equation}
\ln\rho(x) + A_q\rho(x)^{q-1} - B_q\rho(x)^{2q-1} + C = 0.
\end{equation}

As in the discrete case, define the stationary function
\begin{equation}
f(x) = \ln x + A_qx^{q-1} - B_qx^{2q-1} + C,
\end{equation}
whose positive roots determine the possible values assumed by the density.

From the previous subsections, we know that the equation
\begin{equation}
f(x)=0
\end{equation}
possesses at most two positive roots.

We now exclude the possibilities of zero and one roots.

\paragraph{Exclusion of the zero-root case.}

Suppose that no positive roots exist. Then no interior stationary density exists, and therefore the maximum of complexity must occur on the boundary of the space of probability densities.

Two extremal configurations are immediately identified:

\begin{enumerate}
\item the uniform density
\begin{equation}
\rho_u(x)=\frac{1}{|\Omega|},
\end{equation}

\item the singular Dirac distribution
\begin{equation}
\rho_d(x)=\delta(x-x_0).
\end{equation}
\end{enumerate}

Both configurations satisfy
\begin{equation}
\Phi[\rho_u] = \Phi[\rho_d] = 0.
\end{equation}

Hence one would obtain
\begin{equation}
\Phi_{\max}=0,
\end{equation}
which contradicts the existence of nontrivial probability densities for which
\begin{equation}
\Phi>0.
\end{equation}

Therefore,
\begin{equation}
N_r\neq0.
\end{equation}

\paragraph{Exclusion of the one-root case.}

Suppose now that the stationary equation possesses exactly one positive root, denoted by $x_0$.

Since every point of the support satisfies the same stationary equation, one must have
\begin{equation}
\rho(x)=x_0,
\end{equation}
for all $x\in\Omega$.

Normalization yields
\begin{equation}
x_0|\Omega|=1,
\end{equation}
and therefore
\begin{equation}
x_0=\frac{1}{|\Omega|}.
\end{equation}

Hence the one-root case corresponds uniquely to the uniform density.

Since the uniform density satisfies
\begin{equation}
\Phi=0,
\end{equation}
while nonuniform densities exist for which
\begin{equation}
\Phi>0,
\end{equation}
the one-root case is impossible. Therefore,
\begin{equation}
N_r\neq1.
\end{equation}

Combining these results with the fact that
\begin{equation}
N_r\le2,
\end{equation}
we conclude that
\begin{equation}
\boxed{N_r=2.}
\end{equation}

Hence the continuous stationary equation possesses exactly two positive branches,
\begin{equation}
\rho_h<\rho_c,
\end{equation}
which constitute the continuous core--halo structure.

Thus the two-valued theorem established for discrete probability vectors remains valid in the continuum limit.

\subsection{Reduction of the variational problem to multiplicity analysis}

The preceding subsections established that the stationary equation
\begin{equation}
f(x)=0
\end{equation}
possesses exactly two positive roots,
\begin{equation}
x_h<x_c.
\end{equation}

Consequently, every probability component must assume one of these two values.

\paragraph{Discrete case.}

Let $m$ denote the number of components occupying the upper branch $x_c$. Then the probability vector necessarily possesses the form
\begin{equation}
W = ( \underbrace{x_c,\ldots,x_c}_{m}, \underbrace{x_h,\ldots,x_h}_{N-m} ).
\end{equation}

Normalization imposes
\begin{equation}
mx_c+(N-m)x_h=1.
\end{equation}

Therefore, once the two stationary branches have been determined, the original optimization problem over the entire simplex reduces to the determination of the integer parameter
\begin{equation}
m\in\{1,2,\ldots,N-1\}.
\end{equation}

Hence the variational problem becomes a discrete combinatorial problem.

\paragraph{Continuous case.}

For a continuous density, let $\Omega_c \subset\Omega$ denote the subset occupied by the core branch $\rho_c$, and let $\Omega_h=\Omega\setminus\Omega_c$ denote the complementary halo region.

Define
\begin{equation}
M=|\Omega_c|, \qquad |\Omega_h| = |\Omega|-M.
\end{equation}

The stationary density therefore assumes the form
\begin{equation}
\rho(x) = \begin{cases} \rho_c, & x\in\Omega_c, \\[1ex] \rho_h, & x\in\Omega_h. \end{cases}
\end{equation}

Normalization gives
\begin{equation}
M\rho_c+(|\Omega|-M)\rho_h=1.
\end{equation}

Hence the infinite-dimensional variational problem over all admissible densities is reduced to the determination of the measure parameter $M$.

\paragraph{Unified formulation.}

Thus both the discrete and continuous optimization problems are transformed into multiplicity problems.

The stationary amplitudes $x_h, x_c$ (or equivalently $\rho_h, \rho_c$) are determined by the stationary equation, while the remaining degree of freedom is encoded entirely in the multiplicity parameter $m$ for discrete systems, or the measure parameter $M$ for continuous systems.

Therefore, the determination of the complexity maximizer reduces to the study of the function
\begin{equation}
\Phi=\Phi(m)
\end{equation}
in the discrete case and
\begin{equation}
\Phi=\Phi(M)
\end{equation}
in the continuous case.

The analysis of these functions constitutes the mathematical foundation of the $m$-core theorem proved in Section~7.

\section{Derivation of the multiplicity optimization}

\subsection{Reduction of the constrained optimization problem}

The generalized LMC complexity introduced in the main text is

\begin{equation}
\Phi
=
H_{1}
+
2\ln P_{q}
-
\ln P_{2q},
\label{eq:C1_phi_original}
\end{equation}

where

\begin{align}
H_{1}
&=
-\sum_{i=1}^{N}w_i\ln w_i,
\\
P_q
&=
\sum_{i=1}^{N}w_i^q,
\\
P_{2q}
&=
\sum_{i=1}^{N}w_i^{2q},
\end{align}

subject to the normalization constraint

\begin{equation}
\sum_{i=1}^{N}w_i=1.
\label{eq:C1_normalization}
\end{equation}

In Section~6 it was proved that every stationary point of
\(\Phi\) possesses exactly two distinct probability values.
Consequently every extremum belongs to the two-level family

\begin{equation}
W_m
=
(
\underbrace{w_c,\ldots,w_c}_{m},
\underbrace{w_h,\ldots,w_h}_{N-m}
),
\label{eq:C1_twolevel}
\end{equation}

where \(m\in\{1,\ldots,N-1\}\).

The normalization condition immediately becomes

\begin{equation}
mw_c+(N-m)w_h=1.
\label{eq:C1_constraint}
\end{equation}

Substituting (\ref{eq:C1_twolevel}) into the complexity functional gives

\begin{align}
\Phi
&=
-mw_c\ln w_c
-(N-m)w_h\ln w_h
\nonumber\\
&
\qquad
+
2\ln
\!\left(
mw_c^{\,q}
+
(N-m)w_h^{\,q}
\right)
\nonumber\\
&
\qquad
-
\ln
\!\left(
mw_c^{\,2q}
+
(N-m)w_h^{\,2q}
\right).
\label{eq:C1_phi_twolevel}
\end{align}

To enforce the normalization condition we introduce the Lagrangian

\begin{equation}
\mathcal L
=
\Phi
+
\lambda
\left[
1-mw_c-(N-m)w_h
\right].
\label{eq:C1_lagrangian}
\end{equation}

Equation (\ref{eq:C1_constraint}) allows one variable to be eliminated.
Solving for the halo amplitude gives

\begin{equation}
w_h
=
\frac{1-mw_c}{N-m}.
\label{eq:C1_wh}
\end{equation}

Substituting (\ref{eq:C1_wh}) into
(\ref{eq:C1_lagrangian}) causes the constraint term to vanish identically,

\[
1-mw_c-(N-m)
\left(
\frac{1-mw_c}{N-m}
\right)
=0,
\]

so that

\begin{equation}
\boxed{
\mathcal L
=
\Phi(w_c,m).
}
\label{eq:C1_unconstrained}
\end{equation}

Hence the constrained optimization problem has been transformed into an
ordinary optimization problem involving only the variables
\((w_c,m)\).

For every fixed multiplicity \(m\), the optimal core amplitude is obtained
from the first-order stationarity condition

\begin{equation}
\frac{\partial\Phi}{\partial w_c}=0,
\label{eq:C1_stationary}
\end{equation}

whose solution defines a unique function

\begin{equation}
w_c=w_c^{*}(m).
\end{equation}

Substituting this stationary solution back into the objective produces the
optimized one-parameter family

\begin{equation}
\boxed{
\Phi^{*}(m)
=
\Phi\!\left(w_c^{*}(m),m\right),
}
\label{eq:C1_phi_star}
\end{equation}

which is the object studied throughout the multiplicity optimization
analysis.

\subsubsection*{Continuous formulation}

The continuous generalized LMC complexity is

\begin{equation}
\Phi
=
H_{1}
+
2\ln P_q
-
\ln P_{2q},
\end{equation}

where

\begin{align}
H_{1}
&=
-\int_{\Omega}\rho(x)\ln\rho(x)\,dx,
\\
P_q
&=
\int_{\Omega}\rho(x)^q\,dx,
\\
P_{2q}
&=
\int_{\Omega}\rho(x)^{2q}\,dx,
\end{align}

subject to the normalization condition

\begin{equation}
\int_{\Omega}\rho(x)\,dx=1.
\end{equation}

By the continuous two-level theorem proved in Section~6, every stationary
density possesses exactly two distinct probability values. Consequently,
every extremum belongs to the family

\begin{equation}
\rho_M(x)
=
\begin{cases}
\rho_c, & x\in\Omega_c,\\
\rho_h, & x\in\Omega_h,
\end{cases}
\end{equation}

where

\[
\mu(\Omega_c)=M,
\qquad
\mu(\Omega_h)=L-M,
\]

and \(L=\mu(\Omega)\) denotes the total measure of the support.

The normalization condition therefore becomes

\begin{equation}
M\rho_c+(L-M)\rho_h=1.
\label{eq:C1_constraint_cont}
\end{equation}

Substituting the two-level density into the functional gives

\begin{align}
\Phi
&=
-M\rho_c\ln\rho_c
-(L-M)\rho_h\ln\rho_h
\nonumber\\
&
\qquad
+
2\ln
\left(
M\rho_c^{\,q}
+
(L-M)\rho_h^{\,q}
\right)
\nonumber\\
&
\qquad
-
\ln
\left(
M\rho_c^{\,2q}
+
(L-M)\rho_h^{\,2q}
\right).
\end{align}

Introducing the constrained Lagrangian,

\begin{equation}
\mathcal L
=
\Phi
+
\lambda
\left[
1
-
M\rho_c
-
(L-M)\rho_h
\right],
\end{equation}

and eliminating the halo amplitude through

\begin{equation}
\rho_h
=
\frac{1-M\rho_c}{L-M},
\end{equation}

the constraint term again vanishes identically, yielding

\begin{equation}
\boxed{
\mathcal L
=
\Phi(\rho_c,M).
}
\end{equation}

Hence the continuous constrained optimization problem also reduces to an
ordinary optimization problem involving only the variables
\((\rho_c,M)\).

For every fixed measure \(M\), the optimal core density is determined by

\begin{equation}
\frac{\partial\Phi}{\partial\rho_c}=0,
\end{equation}

whose solution defines

\begin{equation}
\rho_c=\rho_c^{*}(M).
\end{equation}

Substitution into the objective produces the optimized one-parameter family

\begin{equation}
\boxed{
\Phi^{*}(M)
=
\Phi\!\left(\rho_c^{*}(M),M\right).
}
\end{equation}

\subsection{Derivation of the stationary equation}

Starting from the unconstrained functional obtained in the previous subsection,
\begin{equation}
\Phi=\Phi(w_c,m),
\end{equation}
the optimal core amplitude is determined by the first-order stationarity
condition
\begin{equation}
\frac{\partial\Phi}{\partial w_c}=0.
\label{eq:C2_foc_start}
\end{equation}

Since the halo amplitude is constrained by
\begin{equation}
w_h=\frac{1-mw_c}{N-m},
\end{equation}
its derivative with respect to the core amplitude is
\begin{equation}
\frac{dw_h}{dw_c}
=
-\frac{m}{N-m}.
\label{eq:C2_dwh}
\end{equation}

The Shannon entropy contribution becomes
\begin{align}
H_1
&=
-mw_c\ln w_c
-(N-m)w_h\ln w_h.
\end{align}

Differentiating with respect to \(w_c\) gives
\begin{align}
\frac{\partial H_1}{\partial w_c}
&=
-m(\ln w_c+1)
-(N-m)
(\ln w_h+1)
\frac{dw_h}{dw_c}
\nonumber\\
&=
m
\ln\!\left(\frac{w_h}{w_c}\right).
\label{eq:C2_dH}
\end{align}

Next consider the escort moments
\begin{align}
P_q
&=
mw_c^q
+
(N-m)w_h^q,
\\
P_{2q}
&=
mw_c^{2q}
+
(N-m)w_h^{2q}.
\end{align}

Using (\ref{eq:C2_dwh}),
\begin{align}
\frac{\partial P_q}{\partial w_c}
&=
mqw_c^{q-1}
+
(N-m)qw_h^{q-1}
\frac{dw_h}{dw_c}
\nonumber\\
&=
mq
\left(
w_c^{q-1}
-
w_h^{q-1}
\right),
\label{eq:C2_dPq}
\end{align}
and similarly
\begin{align}
\frac{\partial P_{2q}}{\partial w_c}
&=
2qm
\left(
w_c^{2q-1}
-
w_h^{2q-1}
\right).
\label{eq:C2_dP2q}
\end{align}

Applying the chain rule,

\begin{align}
\frac{\partial}{\partial w_c}
\left(
2\ln P_q
\right)
&=
\frac{2}{P_q}
\frac{\partial P_q}{\partial w_c}
\nonumber\\
&=
\frac{2qm
\left(
w_c^{q-1}
-
w_h^{q-1}
\right)}
{P_q},
\end{align}

and

\begin{align}
\frac{\partial}{\partial w_c}
\left(
-\ln P_{2q}
\right)
&=
-
\frac{1}{P_{2q}}
\frac{\partial P_{2q}}{\partial w_c}
\nonumber\\
&=
-
\frac{2qm
\left(
w_c^{2q-1}
-
w_h^{2q-1}
\right)}
{P_{2q}}.
\end{align}

Collecting all contributions,

\begin{align}
\frac{\partial\Phi}{\partial w_c}
&=
m
\ln\!\left(\frac{w_h}{w_c}\right)
+
\frac{2qm
\left(
w_c^{q-1}
-
w_h^{q-1}
\right)}
{P_q}
\nonumber\\
&
\qquad
-
\frac{2qm
\left(
w_c^{2q-1}
-
w_h^{2q-1}
\right)}
{P_{2q}}.
\end{align}

Since \(m>0\), the stationarity condition
(\ref{eq:C2_foc_start}) is equivalent to

\begin{equation}
\boxed{
\ln\!\left(\frac{w_c}{w_h}\right)
=
-
\frac{2q
\left(
w_c^{q-1}
-
w_h^{q-1}
\right)}
{P_q}
+
\frac{2q
\left(
w_c^{2q-1}
-
w_h^{2q-1}
\right)}
{P_{2q}}.
}
\label{eq:C2_foc}
\end{equation}

Equation (\ref{eq:C2_foc}) is precisely the stationary equation used
throughout the multiplicity analysis. Solving this equation determines the
optimal core amplitude \(w_c^*(m)\), from which the optimized complexity
\(\Phi^*(m)\) follows immediately.

\subsubsection*{Continuous derivation}

Starting from the unconstrained functional

\begin{equation}
\Phi=\Phi(\rho_c,M),
\end{equation}

the optimal core density is determined by the stationarity condition

\begin{equation}
\frac{\partial\Phi}{\partial\rho_c}=0.
\label{eq:C2_foc_start_cont}
\end{equation}

Using the normalization condition

\begin{equation}
M\rho_c+(L-M)\rho_h=1,
\end{equation}

the halo density satisfies

\begin{equation}
\rho_h
=
\frac{1-M\rho_c}{L-M},
\end{equation}

and therefore

\begin{equation}
\frac{d\rho_h}{d\rho_c}
=
-\frac{M}{L-M}.
\label{eq:C2_drhoh}
\end{equation}

The Shannon contribution is

\begin{equation}
H_1
=
-M\rho_c\ln\rho_c
-
(L-M)\rho_h\ln\rho_h.
\end{equation}

Differentiating with respect to \(\rho_c\),

\begin{align}
\frac{\partial H_1}{\partial\rho_c}
&=
-M(\ln\rho_c+1)
-(L-M)(\ln\rho_h+1)
\frac{d\rho_h}{d\rho_c}
\nonumber\\
&=
M
\ln\!\left(\frac{\rho_h}{\rho_c}\right).
\label{eq:C2_dH_cont}
\end{align}

The escort moments are

\begin{align}
P_q
&=
M\rho_c^q
+
(L-M)\rho_h^q,
\\
P_{2q}
&=
M\rho_c^{2q}
+
(L-M)\rho_h^{2q}.
\end{align}

Their derivatives become

\begin{align}
\frac{\partial P_q}{\partial\rho_c}
&=
Mq
\left(
\rho_c^{q-1}
-
\rho_h^{q-1}
\right),
\label{eq:C2_dPq_cont}
\\
\frac{\partial P_{2q}}{\partial\rho_c}
&=
2Mq
\left(
\rho_c^{2q-1}
-
\rho_h^{2q-1}
\right).
\label{eq:C2_dP2q_cont}
\end{align}

Using the chain rule,

\begin{align}
\frac{\partial}{\partial\rho_c}
\left(
2\ln P_q
\right)
&=
\frac{
2Mq
\left(
\rho_c^{q-1}
-
\rho_h^{q-1}
\right)
}{P_q},
\end{align}

and

\begin{align}
\frac{\partial}{\partial\rho_c}
\left(
-\ln P_{2q}
\right)
&=
-
\frac{
2Mq
\left(
\rho_c^{2q-1}
-
\rho_h^{2q-1}
\right)
}{P_{2q}}.
\end{align}

Collecting every contribution,

\begin{align}
\frac{\partial\Phi}{\partial\rho_c}
&=
M
\ln\!\left(\frac{\rho_h}{\rho_c}\right)
+
\frac{
2Mq
\left(
\rho_c^{q-1}
-
\rho_h^{q-1}
\right)
}{P_q}
\nonumber\\
&\qquad
-
\frac{
2Mq
\left(
\rho_c^{2q-1}
-
\rho_h^{2q-1}
\right)
}{P_{2q}}.
\end{align}

Since \(M>0\), the stationarity condition
(\ref{eq:C2_foc_start_cont}) is equivalent to

\begin{equation}
\boxed{
\ln\!\left(\frac{\rho_c}{\rho_h}\right)
=
-
\frac{
2q
\left(
\rho_c^{q-1}
-
\rho_h^{q-1}
\right)
}{P_q}
+
\frac{
2q
\left(
\rho_c^{2q-1}
-
\rho_h^{2q-1}
\right)
}{P_{2q}}.
}
\label{eq:C2_foc_cont}
\end{equation}

Equation (\ref{eq:C2_foc_cont}) is the continuous counterpart of
(\ref{eq:C2_foc}) and determines the optimal core density
\(\rho_c^*(M)\). Substituting this solution into the functional yields the
optimized one-parameter family \(\Phi^*(M)\).

\subsection{Application of the Envelope Theorem}

The stationary equation (\ref{eq:C2_foc}) determines the optimal core
probability as an implicit function of the multiplicity parameter,
\begin{equation}
w_c=w_c^{*}(m).
\end{equation}

Consequently, the optimized generalized complexity is
\begin{equation}
\Phi^{*}(m)
=
\Phi\!\left(w_c^{*}(m),m\right),
\end{equation}
and its total derivative with respect to the multiplicity is obtained by the
chain rule,
\begin{equation}
\frac{d\Phi^{*}}{dm}
=
\frac{\partial\Phi}{\partial m}
+
\frac{\partial\Phi}{\partial w_c}
\frac{dw_c^{*}}{dm}.
\label{eq:C3_chain}
\end{equation}

However, the optimized branch \(w_c^{*}(m)\) satisfies the stationary
condition
\begin{equation}
\frac{\partial\Phi}{\partial w_c}=0.
\label{eq:C3_stationary}
\end{equation}

Substituting (\ref{eq:C3_stationary}) into
(\ref{eq:C3_chain}) immediately gives

\begin{equation}
\boxed{
\frac{d\Phi^{*}}{dm}
=
\frac{\partial\Phi}{\partial m}.
}
\label{eq:C3_envelope}
\end{equation}

Equation (\ref{eq:C3_envelope}) is the classical envelope theorem specialized
to the present optimization problem. It shows that the implicit variation of
the stationary amplitude contributes nothing to the derivative of the
optimized complexity.

Therefore, the entire dependence of the optimized complexity upon the
multiplicity parameter is obtained simply by differentiating the original
functional while treating the stationary amplitudes as constants.

This observation transforms the optimization problem from the differentiation
of an implicitly defined function into the evaluation of an ordinary partial
derivative. The explicit computation of this derivative is carried out in the
following subsection.

\subsubsection*{Continuous analogue}

The stationary equation (\ref{eq:C2_foc_cont}) determines the optimal core
density as an implicit function of the core measure,

\begin{equation}
\rho_c=\rho_c^{*}(M).
\end{equation}

Consequently, the optimized generalized complexity is

\begin{equation}
\Phi^{*}(M)
=
\Phi\!\left(\rho_c^{*}(M),M\right),
\end{equation}

whose total derivative with respect to the core measure is

\begin{equation}
\frac{d\Phi^{*}}{dM}
=
\frac{\partial\Phi}{\partial M}
+
\frac{\partial\Phi}{\partial\rho_c}
\frac{d\rho_c^{*}}{dM}.
\label{eq:C3_chain_cont}
\end{equation}

The optimized branch satisfies the stationary condition

\begin{equation}
\frac{\partial\Phi}{\partial\rho_c}=0.
\label{eq:C3_stationary_cont}
\end{equation}

Hence,

\begin{equation}
\boxed{
\frac{d\Phi^{*}}{dM}
=
\frac{\partial\Phi}{\partial M}.
}
\label{eq:C3_envelope_cont}
\end{equation}

Thus, exactly as in the discrete problem, the implicit dependence of the
stationary density upon the core measure contributes nothing to the variation
of the optimized complexity.

The optimization therefore reduces to evaluating the explicit partial
derivative of the original functional while treating the stationary densities
as constants. This derivative is computed in the next subsection.

\subsection{Explicit evaluation of the envelope derivative}

By the envelope theorem,
\begin{equation}
\frac{d\Phi^{*}}{dm}
=
\frac{\partial\Phi}{\partial m},
\end{equation}
where the partial derivative is evaluated while holding the stationary core
probability \(w_c\) fixed.

Since
\begin{equation}
w_h=\frac{1-mw_c}{N-m},
\end{equation}
its partial derivative with respect to the multiplicity is

\begin{equation}
\frac{\partial w_h}{\partial m}
=
\frac{w_h-w_c}{N-m}.
\label{eq:C4_dwhdm}
\end{equation}

The Shannon entropy contribution is

\begin{equation}
H_1
=
-mw_c\ln w_c
-
(N-m)w_h\ln w_h.
\end{equation}

Using (\ref{eq:C4_dwhdm}), one finds

\begin{align}
\frac{\partial H_1}{\partial m}
&=
(w_c-w_h)
-
w_c
\ln\!\left(\frac{w_c}{w_h}\right).
\label{eq:C4_entropy}
\end{align}

Next consider

\begin{equation}
P_q
=
mw_c^q
+
(N-m)w_h^q.
\end{equation}

Differentiating with respect to \(m\),

\begin{align}
\frac{\partial P_q}{\partial m}
&=
w_c^q
-
w_h^q
+
(N-m)
q
w_h^{q-1}
\frac{\partial w_h}{\partial m}
\nonumber\\
&=
q\,w_h^{q-1}(w_c-w_h)
-
(w_c^q-w_h^q).
\label{eq:C4_dPq}
\end{align}

Similarly,

\begin{align}
\frac{\partial P_{2q}}{\partial m}
&=
w_c^{2q}
-
w_h^{2q}
+
(N-m)
(2q)
w_h^{2q-1}
\frac{\partial w_h}{\partial m}
\nonumber\\
&=
(w_c^{2q}-w_h^{2q})
-
2q\,w_h^{2q-1}(w_c-w_h).
\label{eq:C4_dP2q}
\end{align}

Applying the chain rule,

\begin{align}
\frac{\partial}{\partial m}
\left(
2\ln P_q
\right)
&=
\frac{2}{P_q}
\frac{\partial P_q}{\partial m},
\\
\frac{\partial}{\partial m}
\left(
-\ln P_{2q}
\right)
&=
-
\frac{1}{P_{2q}}
\frac{\partial P_{2q}}{\partial m}.
\end{align}

Collecting all contributions gives the exact envelope derivative

\begin{align}
\frac{d\Phi^{*}}{dm}
&=
(w_c-w_h)
-
w_c
\ln\!\left(\frac{w_c}{w_h}\right)
\nonumber\\
&\quad
+
\frac{
2\left[
q\,w_h^{q-1}(w_c-w_h)
-
(w_c^q-w_h^q)
\right]
}{P_q}
\nonumber\\
&\quad
+
\frac{
(w_c^{2q}-w_h^{2q})
-
2q\,w_h^{2q-1}(w_c-w_h)
}{P_{2q}},
\label{eq:C4_raw_envelope}
\end{align}

where

\begin{equation}
P_q
=
mw_c^q+(N-m)w_h^q,
\qquad
P_{2q}
=
mw_c^{2q}+(N-m)w_h^{2q}.
\end{equation}

Equation (\ref{eq:C4_raw_envelope}) is the envelope derivative before eliminating the logarithmic contribution. In the following subsection the logarithm is removed algebraically using the stationary equation, yielding the purely algebraic expression used in the main proof.

\subsubsection*{Continuous analogue}

By the envelope theorem,

\begin{equation}
\frac{d\Phi^{*}}{dM}
=
\frac{\partial\Phi}{\partial M},
\end{equation}

where the partial derivative is evaluated while holding the stationary core
density \(\rho_c\) fixed.

Since

\begin{equation}
\rho_h
=
\frac{1-M\rho_c}{L-M},
\end{equation}

its partial derivative with respect to the core measure is

\begin{equation}
\frac{\partial\rho_h}{\partial M}
=
\frac{\rho_h-\rho_c}{L-M}.
\label{eq:C4_drhohdM}
\end{equation}

The Shannon entropy contribution is

\begin{equation}
H_1
=
-M\rho_c\ln\rho_c
-
(L-M)\rho_h\ln\rho_h.
\end{equation}

Using (\ref{eq:C4_drhohdM}), one finds

\begin{align}
\frac{\partial H_1}{\partial M}
&=
(\rho_c-\rho_h)
-
\rho_c
\ln\!\left(\frac{\rho_c}{\rho_h}\right).
\label{eq:C4_entropy_cont}
\end{align}

Next consider

\begin{equation}
P_q
=
M\rho_c^q
+
(L-M)\rho_h^q.
\end{equation}

Differentiating with respect to \(M\),

\begin{align}
\frac{\partial P_q}{\partial M}
&=
q\rho_h^{q-1}(\rho_c-\rho_h)
-
(\rho_c^q-\rho_h^q).
\label{eq:C4_dPq_cont}
\end{align}

Similarly,

\begin{align}
\frac{\partial P_{2q}}{\partial M}
&=
2q\rho_h^{2q-1}(\rho_c-\rho_h)
-
(\rho_c^{2q}-\rho_h^{2q}).
\label{eq:C4_dP2q_cont}
\end{align}

Applying the chain rule,

\begin{align}
\frac{\partial}{\partial M}
\left(
2\ln P_q
\right)
&=
\frac{2}{P_q}
\frac{\partial P_q}{\partial M},
\\
\frac{\partial}{\partial M}
\left(
-\ln P_{2q}
\right)
&=
-
\frac{1}{P_{2q}}
\frac{\partial P_{2q}}{\partial M}.
\end{align}

Collecting all contributions gives the exact envelope derivative

\begin{align}
\frac{d\Phi^{*}}{dM}
&=
(\rho_c-\rho_h)
-
\rho_c
\ln\!\left(\frac{\rho_c}{\rho_h}\right)
\nonumber\\
&\quad
+
\frac{
2\left[
q\rho_h^{q-1}(\rho_c-\rho_h)
-
(\rho_c^q-\rho_h^q)
\right]
}{P_q}
\nonumber\\
&\quad
+
\frac{
(\rho_c^{2q}-\rho_h^{2q})
-
2q\rho_h^{2q-1}(\rho_c-\rho_h)
}{P_{2q}},
\label{eq:C4_raw_envelope_cont}
\end{align}

where

\begin{equation}
P_q
=
M\rho_c^q
+
(L-M)\rho_h^q,
\qquad
P_{2q}
=
M\rho_c^{2q}
+
(L-M)\rho_h^{2q}.
\end{equation}

Equation (\ref{eq:C4_raw_envelope_cont}) is the continuous counterpart of
(\ref{eq:C4_raw_envelope}). In the following subsection, the logarithmic term
is eliminated using the stationary equation, reducing the envelope derivative
to the same universal algebraic structure obtained in the discrete case.

\subsection{Elimination of the logarithmic contribution}

The envelope derivative obtained in the previous subsection still contains the
transcendental quantity
\(
\ln\!\left(\frac{w_c}{w_h}\right).
\)
This logarithm may be removed exactly by exploiting the first-order optimality
condition.

The stationary equation derived from
\(
\partial\Phi/\partial w_c=0
\)
can be written as

\begin{equation}
\ln\!\left(\frac{w_c}{w_h}\right)
=
-\frac{2q\left(w_c^{q-1}-w_h^{q-1}\right)}{P_q}
+
\frac{2q\left(w_c^{2q-1}-w_h^{2q-1}\right)}{P_{2q}}.
\label{eq:C5_foc}
\end{equation}

Substituting (\ref{eq:C5_foc}) into
(\ref{eq:C4_raw_envelope})
produces

\begin{align}
\frac{d\Phi^{*}}{dm}
&=
(w_c-w_h)
\nonumber\\
&\quad
+
\frac{
2\!\left[
q\,w_c\!\left(w_c^{q-1}-w_h^{q-1}\right)
+
q\,w_h^{q-1}(w_c-w_h)
-
(w_c^q-w_h^q)
\right]
}{P_q}
\nonumber\\
&\quad
+
\frac{
(w_c^{2q}-w_h^{2q})
-
2q\,w_c\!\left(w_c^{2q-1}-w_h^{2q-1}\right)
-
2q\,w_h^{2q-1}(w_c-w_h)
}{P_{2q}}.
\end{align}

The numerators simplify identically,

\begin{align}
&q\,w_c\!\left(w_c^{q-1}-w_h^{q-1}\right)
+
q\,w_h^{q-1}(w_c-w_h)
-
(w_c^q-w_h^q)
\nonumber\\
&\qquad
=
(q-1)\left(w_c^q-w_h^q\right),
\end{align}

and

\begin{align}
&(w_c^{2q}-w_h^{2q})
-
2q\,w_c\!\left(w_c^{2q-1}-w_h^{2q-1}\right)
-
2q\,w_h^{2q-1}(w_c-w_h)
\nonumber\\
&\qquad
=
-(2q-1)\left(w_c^{2q}-w_h^{2q}\right).
\end{align}

Consequently, the envelope derivative reduces to the purely algebraic form

\begin{equation}
\boxed{
\frac{d\Phi^{*}}{dm}
=
(w_c-w_h)
+
\frac{2(q-1)\left(w_c^q-w_h^q\right)}{P_q}
-
\frac{(2q-1)\left(w_c^{2q}-w_h^{2q}\right)}{P_{2q}}.
}
\label{eq:C5_logfree}
\end{equation}

Unlike (\ref{eq:C4_raw_envelope}), Equation
(\ref{eq:C5_logfree}) contains no logarithmic terms.
The reduction is exact and follows solely from the first-order optimality
condition. This identity was further verified independently by symbolic
computation using the \texttt{SymPy} computer algebra system, which confirms
that the difference between the raw envelope derivative and the
logarithm-free expression simplifies identically to zero after substitution of
(\ref{eq:C5_foc}).

\subsubsection*{Continuous analogue}

The envelope derivative obtained in the previous subsection still contains the
transcendental quantity

\[
\ln\!\left(\frac{\rho_c}{\rho_h}\right).
\]

This logarithm may be eliminated exactly by using the first-order optimality
condition.

The stationary equation derived from

\[
\frac{\partial\Phi}{\partial\rho_c}=0
\]

is

\begin{equation}
\ln\!\left(\frac{\rho_c}{\rho_h}\right)
=
-\frac{2q\left(\rho_c^{q-1}-\rho_h^{q-1}\right)}{P_q}
+
\frac{2q\left(\rho_c^{2q-1}-\rho_h^{2q-1}\right)}{P_{2q}}.
\label{eq:C5_foc_cont}
\end{equation}

Substituting (\ref{eq:C5_foc_cont}) into
(\ref{eq:C4_raw_envelope_cont}) gives

\begin{align}
\frac{d\Phi^{*}}{dM}
&=
(\rho_c-\rho_h)
\nonumber\\
&
\quad
+
\frac{
2q\rho_c
\left(
\rho_c^{q-1}-\rho_h^{q-1}
\right)
+
2
\left[
q\rho_h^{q-1}(\rho_c-\rho_h)
-
(\rho_c^q-\rho_h^q)
\right]
}{P_q}
\nonumber\\
&
\quad
-
\frac{
2q\rho_c
\left(
\rho_c^{2q-1}-\rho_h^{2q-1}
\right)
-
(\rho_c^{2q}-\rho_h^{2q})
+
2q\rho_h^{2q-1}(\rho_c-\rho_h)
}{P_{2q}}.
\end{align}

The numerators simplify identically,

\begin{align}
&
2q\rho_c
\left(
\rho_c^{q-1}-\rho_h^{q-1}
\right)
+
2
\left[
q\rho_h^{q-1}(\rho_c-\rho_h)
-
(\rho_c^q-\rho_h^q)
\right]
\nonumber\\
&\qquad
=
2(q-1)
\left(
\rho_c^q-\rho_h^q
\right),
\end{align}

and

\begin{align}
&
2q\rho_c
\left(
\rho_c^{2q-1}-\rho_h^{2q-1}
\right)
-
(\rho_c^{2q}-\rho_h^{2q})
+
2q\rho_h^{2q-1}
(\rho_c-\rho_h)
\nonumber\\
&\qquad
=
(2q-1)
\left(
\rho_c^{2q}-\rho_h^{2q}
\right).
\end{align}

Consequently, the envelope derivative reduces to the purely algebraic form

\begin{equation}
\boxed{
\frac{d\Phi^{*}}{dM}
=
(\rho_c-\rho_h)
+
\frac{
2(q-1)
\left(
\rho_c^q-\rho_h^q
\right)
}{P_q}
-
\frac{
(2q-1)
\left(
\rho_c^{2q}-\rho_h^{2q}
\right)
}{P_{2q}}.
}
\label{eq:C5_logfree_cont}
\end{equation}

Equation (\ref{eq:C5_logfree_cont}) is the exact continuous analogue of
(\ref{eq:C5_logfree}). The logarithmic contribution has disappeared
completely, leaving a purely algebraic expression involving only escort
moments and stationary densities.

As in the discrete case, this identity was independently verified using the
\texttt{SymPy} computer algebra system, which confirms that the difference
between the raw envelope derivative and
(\ref{eq:C5_logfree_cont}) simplifies identically to zero after substitution
of the stationary equation (\ref{eq:C5_foc_cont}).

\subsection{Reduction to a dimensionless ratio}

Equation (\ref{eq:C5_logfree}) is entirely algebraic but still depends upon the
two stationary amplitudes \(w_c\) and \(w_h\). A further simplification is
obtained by expressing the derivative in terms of their ratio.

Since the core branch satisfies
\begin{equation}
w_c>w_h,
\end{equation}
we introduce the dimensionless variable

\begin{equation}
\tau=\frac{w_c}{w_h}>1.
\label{eq:C6_tau}
\end{equation}

The normalization condition

\begin{equation}
mw_c+(N-m)w_h=1
\end{equation}

immediately gives

\begin{equation}
w_h=\frac{1}{m\tau+(N-m)},
\qquad
w_c=\frac{\tau}{m\tau+(N-m)}.
\label{eq:C6_whwc}
\end{equation}

Substituting (\ref{eq:C6_whwc}) into the escort moments yields

\begin{align}
P_q
&=
mw_c^q+(N-m)w_h^q
\nonumber\\
&=
\frac{m\tau^q+(N-m)}
{\left[m\tau+(N-m)\right]^q},
\\
P_{2q}
&=
mw_c^{2q}+(N-m)w_h^{2q}
\nonumber\\
&=
\frac{m\tau^{2q}+(N-m)}
{\left[m\tau+(N-m)\right]^{2q}}.
\end{align}

It is convenient to introduce the family of rational response kernels

\begin{equation}
R_k(\tau)
=
\frac{\tau^k-1}
{m\tau^k+(N-m)},
\qquad
k>0.
\label{eq:C6_Rk}
\end{equation}

Using (\ref{eq:C6_whwc}), one immediately finds

\begin{align}
w_c-w_h
&=
w_h\,R_1,
\\
\frac{w_c^q-w_h^q}{P_q}
&=
w_h\,R_q,
\\
\frac{w_c^{2q}-w_h^{2q}}{P_{2q}}
&=
w_h\,R_{2q}.
\end{align}

Substituting these identities into
(\ref{eq:C5_logfree}) gives

\begin{equation}
\boxed{
\frac{d\Phi^{*}}{dm}
=
\left[
R_1
+
2(q-1)R_q
-
(2q-1)R_{2q}
\right].
}
\label{eq:C6_discrete}
\end{equation}

Exactly the same reduction holds for the continuous problem.

Defining

\begin{equation}
\tau=\frac{\rho_c}{\rho_h}>1,
\end{equation}

the normalization condition

\begin{equation}
M\rho_c+(L-M)\rho_h=1
\end{equation}

gives

\begin{equation}
\rho_h=\frac{1}{M\tau+(L-M)},
\qquad
\rho_c=\frac{\tau}{M\tau+(L-M)}.
\end{equation}

Introducing

\begin{equation}
R_k(\tau)
=
\frac{\tau^k-1}
{M\tau^k+(L-M)},
\end{equation}

the optimized envelope derivative becomes

\begin{equation}
\boxed{
\frac{d\Phi^{*}}{dM}
=
\left[
R_1
+
2(q-1)R_q
-
(2q-1)R_{2q}
\right].
}
\label{eq:C6_continuous}
\end{equation}

Thus both the discrete and continuous optimization problems reduce to
exactly the same universal algebraic structure. The remaining task is
therefore to determine the sign of the common polynomial, which is established in the following subsection.

\subsection{Monotonicity of the kernel polynomial}

After eliminating the logarithmic contribution, both the discrete and
continuous envelope derivatives reduce to the universal form

\begin{equation}
\frac{d\Phi^{*}}{d\mu}
=
R_1(\tau)
+
2(q-1)R_q(\tau)
-
(2q-1)R_{2q}(\tau),
\label{eq:C6_kernel}
\end{equation}

where

\[
\mu=
\begin{cases}
m,&\text{discrete},\\
M,&\text{continuous},
\end{cases}
\]

and

\begin{equation}
R_k(\tau)
=
\frac{\tau^k-1}
{\mu\tau^k+\Lambda-\mu},
\qquad
\tau>1,
\label{eq:C6_Rk}
\end{equation}

with

\[
\Lambda=
\begin{cases}
N,&\text{discrete},\\
L,&\text{continuous}.
\end{cases}
\]

The sign of the optimized derivative is therefore entirely determined by the
kernel polynomial

\begin{equation}
\Psi(\tau)
=
R_1
+
2(q-1)R_q
-
(2q-1)R_{2q}.
\end{equation}

To determine its sign, we first establish the monotonicity of the family
\(R_k\).

Differentiating (\ref{eq:C6_Rk}) with respect to the exponent \(k\) yields

\begin{equation}
\frac{\partial R_k}{\partial k}
=
\frac{
\Lambda\,\tau^k\ln(\tau)
}
{
\left(
\mu\tau^k+\Lambda-\mu
\right)^2
}.
\label{eq:C6_dR}
\end{equation}

Since

\[
\Lambda>0,
\qquad
\mu>0,
\qquad
\tau>1,
\]

every factor appearing in
(\ref{eq:C6_dR}) is strictly positive. Hence

\begin{equation}
\boxed{
\frac{\partial R_k}{\partial k}>0.
}
\label{eq:C6_monotone}
\end{equation}

Therefore \(R_k\) is a strictly increasing function of its index.

Because

\[
1<q<2q,
\]

it immediately follows that

\begin{equation}
R_1
<
R_q
<
R_{2q}.
\label{eq:C6_order}
\end{equation}

Now define the weights

\begin{equation}
\lambda
=
\frac{1}{2q-1},
\qquad
1-\lambda
=
\frac{2(q-1)}{2q-1},
\end{equation}

which satisfy

\[
0<\lambda<1,
\qquad
\lambda+(1-\lambda)=1.
\]

Hence

\begin{equation}
\frac{R_1+2(q-1)R_q}{2q-1}
=
\lambda R_1
+
(1-\lambda)R_q,
\end{equation}

is a convex combination of \(R_1\) and \(R_q\).

Since both satisfy
(\ref{eq:C6_order}),

\begin{equation}
R_1<R_{2q},
\qquad
R_q<R_{2q},
\end{equation}

every convex combination of them is also strictly smaller than
\(R_{2q}\). Therefore

\begin{equation}
R_1
+
R_q
<
2R_{2q}.
\end{equation}

Multiplying both sides by the positive factor
\((2q-1)\) gives

\begin{equation}
R_1
+
2(q-1)R_q
<
(2q-1)R_{2q},
\end{equation}

or equivalently,

\begin{equation}
\boxed{
R_1
+
2(q-1)R_q
-
(2q-1)R_{2q}
<
0.
}
\label{eq:C6_negative}
\end{equation}

Substituting (\ref{eq:C6_negative}) into
(\ref{eq:C6_kernel}) establishes the universal monotonicity law

\begin{equation}
\boxed{
\frac{d\Phi^{*}}{d\mu}<0.
}
\label{eq:C6_final}
\end{equation}

Thus the optimized generalized complexity decreases strictly as the size of
the high-density core increases. Consequently, the global maximum is attained
at the smallest admissible core,

\begin{equation}
\boxed{
m=1
\qquad\text{(discrete)},
\qquad
M\rightarrow0^{+}
\qquad\text{(continuous)}.
}
\end{equation}

This completes the proof of the multiplicity optimization theorem.

\bibliographystyle{iopart-num}        
\bibliography{bibliography}      

\end{document}